\documentclass[12pt]{article} 
\usepackage[margin=1in]{geometry} 
\usepackage[sectionbib]{natbib}
\usepackage{chapterbib}
\usepackage{graphicx}
\usepackage{hyperref}  

\usepackage{amsmath}
\usepackage{amssymb}
\usepackage{amsfonts}
\usepackage{multirow}
\usepackage{amsthm}
\usepackage{xcolor}
\usepackage{bm}

\usepackage{A_command}

\newtheorem{theorem}{Theorem}
\newtheorem{lemma}{Lemma}
\newtheorem{corollary}{Corollary}

\theoremstyle{definition}
\newtheorem{definition}{Definition}

\newtheorem{remark}{Remark}

\newtheorem{assumption}{}

\newcommand{\CIP}{\stackrel{p}{\longrightarrow}}
\newcommand{\CID}{\stackrel{d}{\longrightarrow}}
\newcommand{\nri}{n\rightarrow \infty}

\title{Synthetic Data Generation With Incomplete Survey Data Under Informative Sampling}
\author{
	Ayat Almomani$^{1}$, Won Chang$^{2}$, Youngdeok Hwang$^{3}$, \\
	Young Min Kim$^{4}$, and Hang J. Kim$^{*5}$\thanks{Corresponding author. E-mail: hang.kim@uc.edu} \\
	\vspace{0.5em}
	\small $^{1}$Yarmouk University, $^{2}$Seoul National University, \\
	\small $^{3}$The City University of New York, $^{4}$Kyungpook National University, \\
	\small $^{*5}$University of Cincinnati
}
\date{}

\begin{document}
	
	\maketitle
	
	\begin{abstract}
		We propose a Bayesian framework for data synthesis and imputation in complex survey settings with informative sampling. To address variance underestimation in existing Bayesian approaches and to accommodate the missing data encountered in survey data, we introduce an adaptive weighting scheme for parameter estimation. We show that the proposed weighting yields consistent estimators with an asymptotically valid Godambe information matrix. The framework is flexible, accommodating a broad class of Bayesian models and facilitating practical implementation. Simulation studies demonstrate that the proposed method provides accurate uncertainty quantification for both model parameters and synthetic population inference. 
		
		\vspace{1em}
		\noindent \textbf{Key words and phrases:} Godambe information matrix, Nonparametric Bayesian model, Posterior uncertainty adjustment, Sampling weight, Survey weight. 
	\end{abstract}

	\section{Introduction} \label{sec:introduction}
	
	Since their introduction, synthetic data have been widely recognized as a powerful tool for sharing sensitive data held by governments or industry with the public. For a comprehensive overview of the development and applications of synthetic data, we refer readers to recent review papers by \cite{reiter2023} and \cite{drechsler2024}. To accommodate complex data structures and imperfect data quality with missing values and measurement errors, a variety of synthesis strategies have been developed. For example, tree-based methods have proven useful due to their flexibility in generating synthetic data that preserves key features of the original data, including nonlinear relationships, interactions, and mixed data types \citep{reiter2005, Drechsler2011, NowokRaab2016}. Bayesian nonparametric models have also been employed to synthesize continuous data \citep{kim2018JAS}, mixed-type data \citep{Murray2016,lee2026synmicrodata}, and household-level data \citep{Hu2018BA}, offering a coherent framework to evaluate uncertainty from both sampling variability and the synthesis process. More recently, generative modeling approaches, such as generative adversarial networks and diffusion-based models, have been explored for generating synthetic tabular data, potentially offering enhanced scalability and the ability to capture complex multivariate dependencies \citep{Zilong2021, Kang2023, kotelnikov2023tabddpm}. 
	
	Despite this active line of research, the synthesis of survey data collected under complex sampling designs remains relatively underexplored. An early contribution by \cite{dong2014nonparametric} proposed a nonparametric framework for generating synthetic finite populations under complex survey design. Later, \cite{zhou2016multiple} developed the weighted finite population Bayesian bootstrap for imputation and inference in two-stage cluster samples. Recognizing the distinction between the finite-population and the sampling distributions, \cite{savitsky2016bayesian} introduced a pseudo-posterior framework that incorporates sampling weights via a weighted likelihood formulation. \cite{kim2021synthetic} adopted this framework to generate synthetic finite populations of establishment survey data. More recently, \cite{oganian2024} compared the pseudo-likelihood-based data synthesis with a strategy that incorporates design variables directly into data synthesis models. 
	
	Although the pseudo-posterior-based approach offers a principled way to incorporate sampling weights into Bayesian inference, naive implementation of the pseudo-likelihood can lead to incorrect variance estimation. The issue is due to the discrepancy between the weighted likelihood and the actual data-generating process of the finite population. To mitigate this problem, \cite{leon2019} proposed jointly modeling the survey outcome and the inclusion probabilities as an alternative to the pseudo-posterior approach. In contrast, \cite{williams2021bayesian} developed a post-processing variance correction using the Markov chain Monte Carlo (MCMC) draws from the pseudo-posterior distribution. 
	
	In many official surveys, however, the released microdata are collected under complex sampling designs, often involving missing values or measurement errors. In such settings, Bayesian synthetic data generation requires joint parameter estimation and imputation within the MCMC procedure. Existing variance-adjustment approaches typically rely on post hoc corrections to pseudo-posterior draws.  Thus, these approaches are incompatible with the imputation step in a unified Bayesian synthesis framework. In this article, we propose a pseudo-posterior synthetic-data generation method for incomplete survey samples collected under informative sampling designs. We showed that a simple magnitude adjustment \citep{ribatet2012bayesian} yields asymptotically correct variance estimation and can be implemented directly within flexible Bayesian models, including, for example, Dirichlet process Gaussian mixture models. 
	
	The remainder of the article is organized as follows. Section \ref{sec:background} gives a brief overview of the background. Section \ref{sec:methodology} presents the proposed methodology and derives its theoretical properties. 
	In Section \ref{sec:simulation}, simulation results demonstrate that the method provides accurate uncertainty quantification for the model parameter estimation and with the synthetic populations. 
	We conclude with a summary and discussion in Section \ref{sec:discussion}.

	\section{Background}  \label{sec:background}
	
	\subsection{Synthetic data generation with pseudo-posterior distributions under informative sampling}
	
	Let $\bmY_N = (\bmy_1, \ldots, \bmy_N)$ denote a finite population of $N$ identifiable units whose values are generated from a superpopulation model $f(\cdot|\bmtheta_0)$. Under a survey design, a sample $\bmY_n = \left\{\bmy_i: i \in \mathcal{S} \right\}$ is observed, where $\mathcal{S}$ denotes the index set of sampled units and $\bmw_n = \left\{w_i: i \in \mathcal{S} \right\}$ are the corresponding sampling weights. The observed survey data are assumed to be confidential and therefore cannot be released in their original form.

Suppose an analyst posits a parametric or Bayesian nonparametric model $\{ f(\cdot|\bmtheta): \bmtheta \in \bmTheta \}$ to characterize the distribution of $\bmY_N$, as well as the underlying data-generating process. Under a Bayesian joint modeling framework, a synthetic record $\tilde{\bmy}$ is generated from the posterior predictive distribution
\begin{equation} \label{eqn:predictive-distribution}
	f(\tilde{\bmy} | \bmY_n) = \int f(\tilde{\bmy} | \bmtheta) \pi(\bmtheta | \bmY_n ) \textrm{d} \bmtheta    
\end{equation}
where $\pi(\bmtheta | \bmY_n )$ denotes the posterior distribution of $\bmtheta$ given the observed sample. 

When observations $\bmy_i \in \bmY_n$ are independent and identically distributed or arise from simple random sampling, the posterior distribution takes the familiar form
\begin{equation}
	\label{eq:posterior_iid}
	\pi(\bmtheta | \bmY_n ) \propto \prod_{i=1}^n f( \bmy_i | \bmtheta ) \pi(\bmtheta) 
\end{equation}
where $\pi(\bmtheta)$ is the prior distribution of $\bmtheta$. Synthetic data generated from \eqref{eqn:predictive-distribution} using \eqref{eq:posterior_iid} preserves key distributional features of the finite population $Y_N$.  

However, many survey datasets arise from complex, informative sampling designs in which inclusion probabilities depend on variables related to the outcome of interest. Under informative sampling, the sample distribution generally differs from the finite population (or superpopulation) distribution. Applying the likelihood in \eqref{eq:posterior_iid} without accounting for the sampling mechanism may therefore yield biased parameter estimates and mis-calibrated uncertainty, leading synthetic data to misrepresent population-level relationships. Sampling weights contain essential information for reconciling discrepancies between the observed sample and the target population; ignoring them can distort distributions, dependence structures, and variance estimates. In Figure \ref{fig:density}, the density $f( \bmy_i | \bmtheta)$ under simple random sampling resembles the population distribution. However, under Poisson sampling, it deviates from the population due to its informative sampling design, which is evidenced by the disappearance of the local mode in the bottom-left corner.

\begin{figure}[t!]
	\includegraphics [scale=0.6]{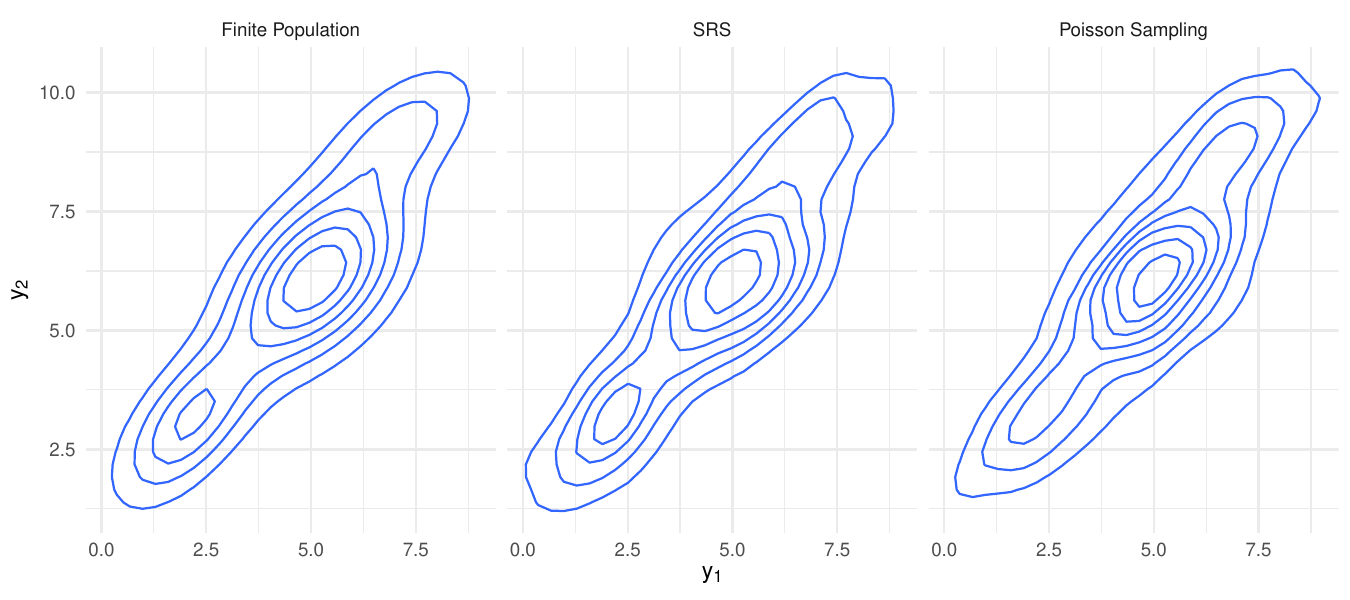}\par
	\caption{Densities of a finite population (left), sampling distribution of simple random sampling (center), and that under Poisson sampling (right).}
	\label{fig:density}
\end{figure}

To incorporate sampling weights within a Bayesian framework, we adopt the pseudo-likelihood approach. \cite{savitsky2016bayesian} proposed constructing a pseudo-posterior by exponentiating each unit’s likelihood function with its corresponding sampling weight. This framework has been extended to unit-level models for non-Gaussian data \citep{parker2020conjugate} and utilized to generate synthetic populations for establishment survey data \citep{kim2021synthetic}.

Formally, following \cite{savitsky2016bayesian}, the pseudo-posterior distribution with the sampling weights is defined as 
\begin{equation} \label{eqn:pseudo-posterior}
	\pi(\bmtheta|\bmY_n, \bmw_n) \propto \prod_{i \in \mathcal{S}} f(\bmy_i|\bmtheta)^{w_i} \cdot \pi(\bmtheta).
\end{equation}
This formulation assigns larger weights to observations representing larger segments of the population and yields approximately unbiased inference under unequal-probability sampling. Sampling weights in the exponents play an analogous role in design-based inference, where they are used to construct design-unbiased estimators. However, the pseudo-posterior in \eqref{eqn:pseudo-posterior} treats each observation $w_i$ as if it were replicated $w_i$ times in the likelihood. This implicit replication inflates the sample information, producing a pseudo-posterior that is typically under-dispersed relative to the true posterior. Consequently, posterior uncertainty can be underestimated, leading to overly narrow credible intervals and overstated inferential precision.

\subsection{Variance estimation in pseudo-Bayesian inference}

Despite the philosophical differences between frequentist and Bayesian inference, their large-sample behaviors are closely related. Under standard regularity conditions, the Bernstein–von Mises theorem implies that the posterior distribution is asymptotically normal, centered at the maximum likelihood estimator, with a covariance equal to the inverse of the Fisher information. Consequently, posterior mean and standard deviation behave similarly to the maximum likelihood estimator and its associated standard error in large samples, under the (frequentist) repeated sampling framework. 

When the likelihood function is misspecified or replaced by a working likelihood, such as the composite likelihood or the pseudo-likelihood, this asymptotic equivalence no longer guarantees valid uncertainty quantification. In such settings, frequentist sampling variance estimation of Bayesian estimates generally requires a sandwich-type variance correction, analogous to that used in frequentist M-estimation \citep[e.g.,][]{ribatet2012bayesian,li2024bayesian}.

Uncertainty quantification under misspecified likelihoods in Bayesian inference has been extensively studied. Early work on a {\it coverage proper} Bayesian likelihood was discussed by \cite{Monahan1992}. Subsequently, \cite{chernozhukov2003mcmc} established asymptotic normality results for {\it quasi-posterior} distributions based on nonlikelihood-based objective functions. For Bayesian inference from composite likelihoods, \cite{ribatet2012bayesian} proposed two posterior-variance adjustment strategies, curvature adjustment and magnitude adjustment, based on asymptotic theory for composite likelihood ratio tests.
A particularly convenient approach is the open-faced sandwich adjustment introduced by \cite{shaby2014open}. Let $\bmtheta^{(1)},\ldots, \bmtheta^{(J)}$ denote posterior samples obtained from Markov chain Monte Carlo (MCMC) samples targeting a quasi-posterior distribution. The calibrated samples are constructed via the affine transformation $\bmtheta_{adj}^{(j)} = \hat{\bmtheta}_{\pi} + \mathbf{C}_n \left(  {\bmtheta}^{(j)} -  \hat{\bmtheta}_{\pi}   \right)$ where $\hat{\bmtheta}_{\pi}$ is the quasi-Bayes estimator minimizing the quasi-posterior risk and $\mathbf{C}_n$ is an adjustment matrix derived from the Godambe information matrix \citep{godambe1986parameters, wooldridge1999asymptotic, wooldridge2001asymptotic}. Posterior inference is then based on the transformed samples $\bmtheta_{adj}^{(j)}$'s, rather than the original MCMC samples. \cite{chang2015composite} used the Godambe information matrix to adjust the variance estimator in the context of Bayesian computer model calibration using quasi-likelihood. 

In the same vein, pseudo-posterior inference in the survey sampling context also introduces bias in the variance estimation. To address this, \cite{williams2021bayesian} proposed a simple post-processing step similar to the transformation approach of \cite{ribatet2012bayesian}. 
While the post-processing approaches in \cite{shaby2014open} or \cite{williams2021bayesian} provide valid uncertainty quantification for complete data, they have two major limitations. First, they are limited in handling missing data, a highly common issue in survey sampling. Post-processing methods like the open-faced sandwich adjustment require obtaining the pseudo-posterior mean (or mode) $\hat{\bmtheta}_{\pi}$ \emph{before} applying the adjustment. This creates an inherently two-stage procedure: (1) obtain $\hat{\bmtheta}_{\pi}$ by running MCMC with the pseudo-posterior, and (2) apply the variance adjustment. When missing values are present, however, the initial pseudo-posterior inference is based only on the non-missing values and can thus be biased. While one could impute missing values prior to the first stage, such an approach fails to properly propagate the imputation uncertainty through to the final estimates. Second, the derivation of the Godambe information matrix becomes prohibitively complex for even moderately sophisticated models, including those involving latent variables such as Dirichlet process Gaussian mixture models.

\section{Weight-Calibrated Pseudo-Bayesian Inference} 
\label{sec:methodology}

In this section, we investigate Bayesian inference derived under general weighting schemes for correct uncertainty quantification. For the survey variables of the sampled units $\bmY_n$, we define the weight-calibrated pseudo-likelihood function as 
\begin{equation} 
	\label{eqn:weight-aug-pseudo-likelihood}
	f(\bmY_n|\bmxi_n,\bmtheta) = \prod_{i \in \mathcal{S}} f \left( \bmy_i| \bmtheta \right)^{\xi_i}, 
\end{equation}
where $\xi_i$ is a real-valued weight associated with the $i$th observation, and $\bmxi_n = \left\{\xi_i: i \in \mathcal{S} \right\}$. 
Then, the corresponding weight-calibrated posterior distribution is given by   
\begin{equation} \label{eqn:calibrated-pseudo-posterior}
	\pi(\bmtheta|\bmY_n, \bmxi_n) \propto \prod_{i \in \mathcal{S}} f(\bmy_i|\bmtheta)^{\xi_i} \cdot \pi(\bmtheta). 
\end{equation}

To facilitate the theoretical analysis of the inference based on \eqref{eqn:calibrated-pseudo-posterior}, we first introduce the following necessary notation and definitions. Let $L_n(\bmtheta)=\log f(\bmY_n|\bmxi_n,\bmtheta)=\sum_{i\in \mathcal{S}} \xi_i\log f(\bmy_i|\bmtheta)$, and define the corresponding score function $\bms_n(\bmtheta) =\nabla_{\bmtheta}L_n(\bmtheta)$, the Fisher information $\bmJ_n(\bmtheta) = \E\left\{ \bms_n(\bmtheta)\bms_n(\bmtheta)^{\top} \right\}$, and the negative of the expected Hessian matrix $\bmH_n(\bmtheta) = - \E \left\{ \nabla_{\bmtheta\bmtheta^{\top}} L_n(\bmtheta) \right\}$.

\subsection{Technical assumptions} 

To investigate the asymptotic properties derived from the weight-calibrated pseudo-posterior in \eqref{eqn:calibrated-pseudo-posterior}, some assumptions are imposed first. 

\begin{assumption} \label{assumption:parameter} 
	The true parameter $\bmtheta_0$ belongs to the interior of a compact set $\Theta \in \mathbb{R}^q$. 
\end{assumption} 

\begin{assumption} \label{assumption:identifiability} 
	For any $\delta>0$, there exists $\varepsilon>0$ such that 
	
	$ \liminf_{n \to \infty} P \left[ \sup_{||\bmtheta-\bmtheta_0|| \ge \delta} \frac{1}{n} \left\{  L_n(\bmtheta)- L_n(\bmtheta_0) \right\} \le - \varepsilon \right] = 1.$
\end{assumption} 

\begin{assumption} \label{assumption:expansion}
	For $\bmtheta$ in an open neighborhood of $\bmtheta_0$, 
	\begin{enumerate}
		\item[(i)] 
		$L_n(\bmtheta) - L_n(\bmtheta_0) = (\bmtheta-\bmtheta_0)^{\top} \bms_n(\bmtheta_0) - \frac{1}{2}(\bmtheta-\bmtheta_0)^{\top}\left\{ n \bmH_n(\bmtheta_0) \right\} (\bmtheta-\bmtheta_0)+R_n(\bmtheta)$ where
		\begin{eqnarray*}
			&(a)& \limsup_{n\rightarrow \infty} P^{*} \left( \sup_{M/\sqrt{n} \leq ||\bmtheta-\bmtheta_0||\leq \delta} \frac{|R_n(\bmtheta)|}{n||\bmtheta-\bmtheta_0||^2} >\varepsilon \right) <\varepsilon, \\
			&(b)& \limsup_{n \rightarrow \infty} P^{*} \left( \sup_{||\bmtheta-\bmtheta_0||\leq M/\sqrt{n}} |R_n(\bmtheta)|>\varepsilon \right)=0,
		\end{eqnarray*}
		for each $\varepsilon>0$, a sufficiently small $\delta>0$, and large $M>0$,
		
		\item[(ii)] 
		$\bmJ_n(\bmtheta_0)^{-1/2}\bms_n(\bmtheta_0)/\sqrt{n} \stackrel{d}{\longrightarrow} N(\bmzero,\bmI_q)$ for an $q\times q$ identity matrix $\bmI_q$,  and 
		\item[(iii)] 
		$\bmJ_n(\bmtheta_0)=O(1)$ and $\bmH_n(\bmtheta_0)=O(1)$ are uniformly in $n$ positive-definite constant matrices. 
	\end{enumerate}
\end{assumption} 

\begin{assumption} \label{assumption:4} 
	A prior function $\pi:\Theta \rightarrow \mathbb{R}$ is two-times continuously differentiable in a neighborhood of $\bmtheta_0 \in \Theta \subset \mathbb{R}^q$ and satisfies 
	$
	\pi(\bmtheta)=\pi(\bmtheta_0)+ \nabla_{\bmtheta}\pi(\bmtheta_0)^{\top}(\bmtheta-\bmtheta_0)+ R_1(\bmtheta),
	$
	where $R_1(\bmtheta)=O\left(||\bmtheta-\bmtheta_0||^2\right)$ is the remainder term.
\end{assumption}

\subsection{Theorems for weight-calibrated posterior inference} 

\label{sec:theorems}

\noindent Define the local parameter $\bmh$ as $\bmh=\sqrt{n} (\bmtheta-\bmtheta_0) - \bmH_n(\bmtheta_0)^{-1}\bms_n(\bmtheta_0) / \sqrt{n}$, which represents the normalized deviation from $\bmtheta_0$ centered by the normalized random score function. Applying the change of variables formula $\bmtheta= \bmh / \sqrt{n} +\bmtheta_0 + \bmH_n(\bmtheta_0)^{-1}\bms_n(\bmtheta_0) / n$, the localized weighted-calibrated posterior density for $\bmh$ is defined by $\pi^*\left(\bmh |\bmY_n,\bmxi_n\right)$ $ = \pi \left( \left. \bmh/\sqrt{n} + \bmtheta_0 + \bmH_n(\bmtheta_0)^{-1}\bms_n(\bmtheta_0) / n \right. \right. $ $\left| \ \bmY_n,\bmxi_n \right) / \sqrt{n}$. 
Here, the convergence of the posterior distribution is examined in the total variation of moments norm, which is defined as
$
||f||_{\text{TVM}}\equiv \int_{\mathcal{H}}||\bmh||^2 |f(\bmh)|d\bmh, 
$
for a real-valued measurable function $\bmh$ on $\mathcal{H}$.
\begin{theorem} \label{thm:Post} 
	Under \ref{assumption:parameter} - \ref{assumption:4}, as $n \rightarrow \infty$, 
	$
	\|\pi^*(\bmh|\bmY_n,\bmxi_n) - \pi^*(\bmh|\bmY_{\infty},\bmxi_{\infty})\|_{\text{TVM}} \overset{p}{\rightarrow} 0,
	$ 
	where 
	$
	\pi^*(\bmh | \bmY_\infty,\bmxi_\infty) \sim N \left( \bmzero, \bmH_n^{-1}(\bmtheta_0)  \right).
	$
\end{theorem}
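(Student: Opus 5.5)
This is a Bernstein--von Mises statement for a quasi-posterior. I would follow Chernozhukov and Hong (2003, Theorem 1) almost verbatim, with $L_n$ in place of their criterion. Write $\bmT_n=\bmH_n(\bmtheta_0)^{-1}\bms_n(\bmtheta_0)/\sqrt{n}$, so that $\bmtheta=\bmtheta_0+(\bmh+\bmT_n)/\sqrt n$.

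\textbf{Reduction to a ratio.} The localized posterior can be written as
\[
\pi^*(\bmh|\bmY_n,\bmxi_n)=\frac{\pi(\bmtheta_0+(\bmh+\bmT_n)/\sqrt n)\exp\{\omega_n(\bmh)\}}{C_n},
\]
where
\[
\omega_n(\bmh)=L_n(\bmtheta)-L_n(\bmtheta_0)-\tfrac{1}{2}\bmT_n^{\top}\bmH_n(\bmtheta_0)\bmT_n .
\]
Here $C_n=\int \pi(\cdot)\exp\{\omega_n(\bmh)\}\,d\bmh$. Completing the square in Assumption \ref{assumption:expansion}(i) gives
\[
\omega_n(\bmh)=-\tfrac{1}{2}\bmh^{\top}\bmH_n(\bmtheta_0)\bmh+R_n(\bmtheta).
\]
By Assumption \ref{assumption:expansion}(ii)--(iii), $\bmT_n=O_p(1)$. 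It therefore suffices to show
\[
A_n\equiv\int(1+\|\bmh\|^2)\Bigl|\pi(\bmtheta_0+(\bmh+\bmT_n)/\sqrt n)e^{\omega_n(\bmh)}-\pi(\bmtheta_0)e^{-\frac12\bmh^{\top}\bmH_n\bmh}\Bigr|\,d\bmh\overset{p}{\rightarrow}0 .
\]
Once $A_n\to0$, the normalizer satisfies
\[
C_n\overset{p}{\rightarrow}\pi(\bmtheta_0)(2\pi)^{q/2}|\bmH_n|^{-1/2},
\]
which is bounded away from zero uniformly by (iii). The TVM distance to $N(\bmzero,\bmH_n^{-1})$ is then bounded by $A_n/C_n$ plus a term $|1-C_n/C_\infty|$ times a bounded second moment. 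Both vanish.

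\textbf{Splitting the integral.} I would split $A_n$ into three regions.
\begin{itemize}
\item[(a)] $\|\bmh\|\le M$. Here $\|\bmtheta-\bmtheta_0\|\le (M+\|\bmT_n\|)/\sqrt n$. By Assumption \ref{assumption:expansion}(i)(b), $\sup|R_n|\to0$ in probability. By Assumption \ref{assumption:4}, $\pi(\bmtheta)\to\pi(\bmtheta_0)$ uniformly. Dominated convergence on the compact ball then gives the integral $\to0$ for each fixed $M$.
\item[(b)] $M\le\|\bmh\|\le\delta\sqrt n$. Assumption (i)(a) gives $|R_n|\le \varepsilon\, n\|\bmtheta-\bmtheta_0\|^2\le 2\varepsilon(\|\bmh\|^2+\|\bmT_n\|^2)$ with probability at least $1-\varepsilon$. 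Choose $\varepsilon$ small relative to $\lambda_{\min}(\bmH_n)$, which is uniformly positive by (iii). Then the integrand is dominated by $C(1+\|\bmh\|^2)e^{-c\|\bmh\|^2}$, whose tail beyond $M$ is arbitrarily small as $M\to\infty$. The Gaussian term is handled the same way.
\item[(c)] $\|\bmh\|\ge\delta\sqrt n$, i.e. $\|\bmtheta-\bmtheta_0\|\gtrsim\delta$. Assumption \ref{assumption:identifiability} gives $e^{\omega_n}\le e^{-n\varepsilon+O_p(1)}$. Compactness of $\Theta$ (Assumption \ref{assumption:parameter}) and boundedness of the prior bound the integral by
\[
n^{q/2+1}e^{-n\varepsilon}\int_\Theta(1+n\|\bmtheta-\bmtheta_0\|^2)\pi\,d\bmtheta\to0 .
\]
The Gaussian tail here is trivially negligible.
\end{itemize}
The final step is a diagonal argument: let $M\to\infty$ slowly after $n\to\infty$, so the three pieces together give $A_n\to0$.

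\textbf{Main obstacle.} The delicate step is region (b). The remainder control in Assumption \ref{assumption:expansion}(i)(a) is stated relative to $\|\bmtheta-\bmtheta_0\|$, not $\|\bmh\|$. The shift by $\bmT_n$ must be absorbed using tightness of $\bmT_n$: work on the event $\|\bmT_n\|\le K$, whose probability exceeds $1-\varepsilon$. A second issue is that $\bmH_n(\bmtheta_0)$ depends on $n$, so the limit is a sequence of Gaussians rather than a fixed one. I would handle this by using the uniform eigenvalue bounds in (iii) throughout, so that every domination constant is independent of $n$.
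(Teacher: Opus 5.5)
Your proposal is correct and follows essentially the same route as the paper. The supplement proves Theorem~1 through Lemma~4, which is the normalizing-constant argument with $C_n \to \pi(\bmtheta_0)(2\pi)^{q/2}|\bmH_n(\bmtheta_0)|^{-1/2}$. Lemma~4 in turn rests on Lemma~3, the unnormalized comparison split into the same three regions $\|\bmh\|\le M$, $M\le\|\bmh\|\le\delta\sqrt{n}$ and $\|\bmh\|\ge\delta\sqrt{n}$; this is the Chernozhukov--Hong argument you cite.

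Two points in your write-up are, if anything, more careful than the paper's. You use a single weight $1+\|\bmh\|^2$ where the paper applies Lemma~3 separately with $\alpha=0$ and $\alpha=2$. You also explicitly invoke tightness of the shift $\bmH_n(\bmtheta_0)^{-1}\bms_n(\bmtheta_0)/\sqrt{n}$ to transfer the remainder bounds in A3(i), which are stated in terms of $\|\bmtheta-\bmtheta_0\|$, to statements about $\|\bmh\|$.
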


Theorem \ref{thm:Post} establishes total variation convergence for the weight-calibrated pseudo-posterior, yielding a Bernstein-von Mises-type result. This asymptotic normality justifies the use of Gaussian approximations for posterior summaries and credible intervals in large-samples. Proofs of the main theorems, along with supporting lemmas, are provided in S1 and S2 of the Online Supplement.

\begin{theorem} \label{thm:LTE} 
	Suppose that $\hat{\bmtheta}$ is a posterior mean of $\bmtheta$ with the posterior $\pi^*(\bmh|\bmY_n,\bmxi_n)$. 
	Under \ref{assumption:parameter} - \ref{assumption:4}, 
	$\sqrt{n} (\hat{\bmtheta}-\bmtheta_0)= \bmU_n + o_p(1)$ and  $\bmJ_n^{-1/2}(\bmtheta_0) \bmH_n(\bmtheta_0) \bmU_n$ $\CID$ $N(0,\bmI_q)$,
	where $\bmU_n= \bmH_n(\bmtheta_0)^{-1}\bms_n(\bmtheta_0) / \sqrt{n}$. Hence,  
	$\bmJ_n^{-1/2}(\bmtheta_0)\bmH_n(\bmtheta_0)$ $\sqrt{n} \left( \hat{\bmtheta}-\bmtheta_0 \right)$ $\CID$  $N(0,\bmI_q).$   
\end{theorem}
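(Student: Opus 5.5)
The plan is to read off the posterior mean from Theorem \ref{thm:Post}, using the fact that convergence in the TVM norm controls second moments and therefore first moments. By the change of variables defining $\bmh$, we have $\bmtheta = \bmtheta_0 + \bmU_n/\sqrt{n} + \bmh/\sqrt{n}$. Taking expectations under the weight-calibrated posterior gives the exact identity
\begin{equation*}
\sqrt{n}(\hat{\bmtheta}-\bmtheta_0) = \bmU_n + \int_{\mathcal{H}} \bmh \, \pi^*(\bmh|\bmY_n,\bmxi_n)\, d\bmh .
\end{equation*}
So the first claim reduces to showing that the localized posterior mean $\int \bmh\,\pi^*(\bmh|\bmY_n,\bmxi_n)d\bmh$ is $o_p(1)$.

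To bound this integral, I would compare it with the limit $N(\bmzero,\bmH_n^{-1}(\bmtheta_0))$, whose mean is exactly zero:
\begin{equation*}
\Bigl\|\int \bmh\,\pi^*(\bmh|\bmY_n,\bmxi_n)d\bmh\Bigr\| \le \int \|\bmh\|\,\bigl|\pi^*(\bmh|\bmY_n,\bmxi_n)-\pi^*(\bmh|\bmY_\infty,\bmxi_\infty)\bigr|d\bmh .
\end{equation*}
I would split the domain into $\{\|\bmh\|\le 1\}$ and $\{\|\bmh\|>1\}$. On the outer region, $\|\bmh\|\le\|\bmh\|^2$, so that part is bounded by the TVM distance. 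On the inner region, the integrand is at most the plain total variation distance. Plain total variation is not literally dominated by the TVM norm, because the weight $\|\bmh\|^2$ vanishes near the origin. I would therefore either appeal to the proof of Theorem \ref{thm:Post}, which in Chernozhukov--Hong style arguments establishes convergence in $\int(1+\|\bmh\|^\alpha)|\cdot|$ and hence also in ordinary total variation, or use the cruder bound $\|\bmh\|\le 1$ together with that $L^1$ convergence. This gap between the TVM norm as stated and the norm actually needed is the step I regard as the main obstacle. The fix I would try first is to note that the supplement's proof controls $\int(1+\|\bmh\|^2)|\cdot|$. Granting this, the displayed bound is $o_p(1)$, and hence $\sqrt{n}(\hat{\bmtheta}-\bmtheta_0)=\bmU_n+o_p(1)$.

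For the distributional statements, note that $\bmJ_n^{-1/2}(\bmtheta_0)\bmH_n(\bmtheta_0)\bmU_n = \bmJ_n^{-1/2}(\bmtheta_0)\bms_n(\bmtheta_0)/\sqrt{n}$. This converges in distribution to $N(\bmzero,\bmI_q)$ directly by Assumption \ref{assumption:expansion}(ii). For the final claim, I would write
\begin{equation*}
\bmJ_n^{-1/2}\bmH_n\sqrt{n}(\hat{\bmtheta}-\bmtheta_0) = \bmJ_n^{-1/2}\bmH_n\bmU_n + \bmJ_n^{-1/2}\bmH_n\, o_p(1).
\end{equation*}
By Assumption \ref{assumption:expansion}(iii), $\bmH_n$ is $O(1)$ and $\bmJ_n$ is uniformly positive definite, so $\bmJ_n^{-1/2}\bmH_n=O(1)$. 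The second term is therefore $o_p(1)$, and Slutsky's lemma completes the argument.
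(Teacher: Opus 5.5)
Your proof is correct, but it takes a more direct route than the paper.

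\textbf{The paper's route.} It follows the general-loss argument of Chernozhukov and Hong:
\begin{enumerate}
\item It introduces the convex criteria $Q_n(\bmz)=\int_{\mathcal{H}_n}\|\bmz-\bmh-\bmU_n\|^2\,\pi^*(\bmh|\bmY_n,\bmxi_n)\,d\bmh$ and their Gaussian counterparts $Q_\infty(\bmz)$.
\item It proves $Q_n(\bmz)-Q_\infty(\bmz)\CIP 0$ pointwise, using Lemma~\ref{lm03} with $\alpha=0$ and $\alpha=2$ plus a separate term for $\mathcal{H}_n^c$.
\item It upgrades this to uniform convergence on compacts via Pollard's convexity lemma (Lemma~\ref{lm07}).
\item It identifies the minimizer of $Q_\infty$ as $\bmU_n$ (Lemmas~\ref{lm06} and~\ref{lm05}) and concludes by an argmin-continuity argument.
\end{enumerate}

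\textbf{Your route.} You use the fact that under squared loss both minimizers are explicit means. Their difference is then exactly $\int\bmh\,\pi^*(\bmh|\bmY_n,\bmxi_n)\,d\bmh$, which you bound directly by a first-moment distance to the Gaussian limit. This removes the need for the convexity lemma, Anderson's inequality and the argmin step. You also write out the distributional conclusions via A3(ii), A3(iii) and Slutsky, which the paper leaves implicit.

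\textbf{What each buys.} The paper's machinery is more general: it would cover posterior medians or other convex symmetric losses without change. For the posterior mean, which is what the statement is about, your route is shorter and more transparent.

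\textbf{The obstacle you flag is not real.} You do not need to reopen the proof of Theorem~\ref{thm:Post}. Apply Cauchy--Schwarz with respect to the measure $|\pi^*(\bmh|\bmY_n,\bmxi_n)-\pi^*(\bmh|\bmY_\infty,\bmxi_\infty)|\,d\bmh$. Its total mass is at most $2$, so
$\int\|\bmh\|\,|\pi^*(\bmh|\bmY_n,\bmxi_n)-\pi^*(\bmh|\bmY_\infty,\bmxi_\infty)|\,d\bmh\le\sqrt{2}\,\|\pi^*(\cdot|\bmY_n,\bmxi_n)-\pi^*(\cdot|\bmY_\infty,\bmxi_\infty)\|_{\text{TVM}}^{1/2}$.
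Theorem~\ref{thm:Post} as stated is therefore enough. In any case, your fallback of invoking the $\alpha=0$ version of Lemma~\ref{lm03} is exactly what the paper's own proof does.

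\textbf{A small omission.} The Gaussian limit has mean zero over $\mathbb{R}^q$, but the posterior and the TVM norm live on $\mathcal{H}_n$. You therefore also need $\int_{\mathcal{H}_n^c}\|\bmh\|\,\pi^*(\bmh|\bmY_\infty,\bmxi_\infty)\,d\bmh=o_p(1)$. This holds because $\mathcal{H}_n$ contains $\{\|\bmh\|\le\delta\sqrt{n}-\|\bmU_n\|\}$ whenever the $\delta$-ball around $\bmtheta_0$ lies in $\Theta$ (A1), and $\bmU_n=O_p(1)$. It is the counterpart of the paper's $B_{2n}$ term.
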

As a consequence of Theorem \ref{thm:Post}, Theorem \ref{thm:LTE} establishes the $\sqrt{n}$-consistency and asymptotic normality of an Laplace type estimator (LTE) proposed by \citep{chernozhukov2003mcmc} using the weighted-calibrated posterior. To contextualize this theoretical result, recall that the conventional extremum estimator, defined as $\hat{\bmtheta}_{ex} = \arg\sup_{\bmtheta \in \Theta} f(\bmY_n|\bmxi_n,\bmtheta)$ from \eqref{eqn:weight-aug-pseudo-likelihood}, exhibits a normalized estimation error $\sqrt{n}(\hat{\bmtheta}_{ex}-\bmtheta_0)$ that is first-order equivalent to the linear representation $\bmU_n\equiv \bmH_n(\bmtheta_0)^{-1}\bms_n(\bmtheta_0) / \sqrt{n}$. Crucially, Theorem \ref{thm:LTE} demonstrates that the LTE, despite being derived via integration over the weighted-calibrated posterior rather than direct optimization, shares the same first-order asymptotic expansion with the frequentist extremum estimator $\hat{\bmtheta}_{ex}$. Consequently, the LTE using the weighted-calibrated posterior is asymptotically equivalent to the frequentist extremum estimator, inheriting its asymptotic distribution and efficiency while effectively circumventing the computational hurdles associated with optimizing complex objective functions.

Theorem \ref{thm:LTE} implies that a posterior mean $\hat{\bmtheta}$ is asymptotically normal with sandwich covariance $\bmH_n(\bmtheta_0)^{-1}\bmJ_n(\bmtheta_0)\bmH_n(\bmtheta_0)^{-1} / n$. If the likelihood is correctly specified, then $\bmH_n(\bmtheta_0)=\bmJ_n(\bmtheta_0)$, and this sandwich covariance collapses to the inverse Fisher information $\bmJ_n(\bmtheta_0)^{-1}$. Furthermore, Theorem \ref{thm:Post} establishes a Gaussian shape for the (re-centered) posterior density in the $\bmh$-scale. Theorem ~\ref{thm:LTE} then identifies the center (posterior mean) and its sampling distribution.
Taken together, Theorems \ref{thm:Post} and \ref{thm:LTE} justify the construction of Wald-type credible and confidence sets derived from the weighted-calibrated pseudo-posterior.

\begin{theorem} \label{thm:LSI2}
	Suppose that \ref{assumption:parameter} - \ref{assumption:4} hold. 
	Let  $\hat{\bmtheta} = \int_{\Theta} \bmtheta \ \pi(\bmtheta|\bmY_n,\bmxi_n)d\bmtheta$ denote the posterior mean of $\pi(\bmtheta|\bmY_n,\bmxi_n)$. Define the posterior covariance proxy as 
	$\bmH_n^{-1}(\hat{\bmtheta}) \equiv \int_{\Theta} n(\bmtheta-\hat{\bmtheta})(\bmtheta-\hat{\bmtheta})^{\top} \pi_n(\bmtheta|\bmY_n,\bmxi_n)d\bmtheta,$
	and let $\bmJ_n(\hat{\bmtheta})$ be a consistent estimator of the score variance $\bmJ_n(\bmtheta_0)$. For a continuously differentiable scalar functional $g(\cdot)$, define the one-sided asymptotic confidence interval bound as 
	\begin{equation} 
		\label{eq:ConfidenceInterval}
		c_{g,n}(\alpha) \equiv g(\hat{\bmtheta}) + q_\alpha \sqrt{\frac{1}{n} \left\{ \nabla_{\bmtheta} g(\hat{\bmtheta}) \right \}^{\top} \bmH_n^{-1}(\hat{\bmtheta}) \bmJ_n(\hat{\bmtheta}) \bmH_n^{-1}(\hat{\bmtheta}) \left\{ \nabla_{\bmtheta} g(\hat{\bmtheta}) \right\} },
	\end{equation} 
	where $q_{\alpha}$ is the $\alpha$-quantile of $N(0,1)$.
	Then as $\nri$,
	(a) $\bmH_n(\hat{\bmtheta}) \bmH_n(\bmtheta_0)^{-1}$ $\CIP$  $\bmI_q$,
	and (b) $P\left\{ c_{g,n}(\alpha/2) \le g(\bmtheta_0) \le c_{g,n}(1-\alpha/2) \right\} \rightarrow 1-\alpha.$
\end{theorem}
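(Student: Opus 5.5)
The argument runs through the local parameter $\bmh$ and rests on Theorems~\ref{thm:Post} and~\ref{thm:LTE}.

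\textbf{Part (a).} Write $\bmtheta - \hat{\bmtheta} = (\bmh - \bar{\bmh})/\sqrt{n}$, where
\[
\bar{\bmh} = \int \bmh\, \pi^*(\bmh|\bmY_n,\bmxi_n)\,d\bmh = \sqrt{n}(\hat{\bmtheta}-\bmtheta_0) - \bmU_n .
\]
Substituting gives
\[
\bmH_n^{-1}(\hat{\bmtheta}) = \int (\bmh-\bar{\bmh})(\bmh-\bar{\bmh})^{\top} \pi^*(\bmh|\bmY_n,\bmxi_n)\,d\bmh
= \int \bmh\bmh^{\top}\pi^*(\bmh|\bmY_n,\bmxi_n)\,d\bmh - \bar{\bmh}\bar{\bmh}^{\top}.
\]
Convergence in the TVM norm from Theorem~\ref{thm:Post} controls integrals of $\|\bmh\|^2$ against the difference of densities. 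It therefore controls both the first and second moments, because $\|\bmh\| \le 1 + \|\bmh\|^2$ and the TVM norm also dominates the plain total variation on the region $\|\bmh\|\ge 1$; the unit ball is handled by total variation convergence, which follows from the same proof.

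Two consequences follow. First, $\int \bmh\bmh^{\top}\pi^*\,d\bmh - \bmH_n(\bmtheta_0)^{-1} \CIP \bmzero$. Second, $\bar{\bmh} \CIP \bmzero$, which is also exactly the statement $\sqrt{n}(\hat{\bmtheta}-\bmtheta_0) = \bmU_n + o_p(1)$ of Theorem~\ref{thm:LTE}. Hence $\bmH_n^{-1}(\hat{\bmtheta}) - \bmH_n^{-1}(\bmtheta_0) = o_p(1)$.

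By Assumption~\ref{assumption:expansion}(iii), $\bmH_n(\bmtheta_0)$ is uniformly bounded and uniformly positive definite. Matrix inversion is therefore uniformly continuous on that set, so $\bmH_n(\hat{\bmtheta}) - \bmH_n(\bmtheta_0) = o_p(1)$. This gives $\bmH_n(\hat{\bmtheta})\bmH_n(\bmtheta_0)^{-1} = \bmI_q + o_p(1)$.

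\textbf{Part (b).} Apply the delta method to Theorem~\ref{thm:LTE}. Since $\hat{\bmtheta} \CIP \bmtheta_0$ and $g$ is $C^1$,
\[
\sqrt{n}\{g(\hat{\bmtheta}) - g(\bmtheta_0)\} = \nabla g(\bmtheta_0)^{\top}\sqrt{n}(\hat{\bmtheta}-\bmtheta_0) + o_p(1).
\]
By Theorem~\ref{thm:LTE}, this is asymptotically $N(0,\sigma_n^2)$ with
\[
\sigma_n^2 = \nabla g(\bmtheta_0)^{\top}\bmH_n^{-1}(\bmtheta_0)\bmJ_n(\bmtheta_0)\bmH_n^{-1}(\bmtheta_0)\nabla g(\bmtheta_0).
\]
Because $\bmJ_n$ and $\bmH_n$ vary with $n$, I will phrase this as $\sigma_n^{-1}\sqrt{n}\{g(\hat{\bmtheta})-g(\bmtheta_0)\} \CID N(0,1)$. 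To get it, write $\bmH_n^{-1}\bmJ_n^{1/2}\bmZ_n$ with $\bmZ_n \CID N(0,\bmI_q)$ and project onto the unit vector proportional to $\bmJ_n^{1/2}\bmH_n^{-1}\nabla g$.

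Let $\hat\sigma_n^2$ denote $n$ times the squared radicand in \eqref{eq:ConfidenceInterval}. By part (a), continuity of $\nabla g$, and consistency of $\bmJ_n(\hat{\bmtheta})$, we get $\hat\sigma_n^2 - \sigma_n^2 = o_p(1)$. Provided $\sigma_n^2$ stays bounded away from zero, this gives $\hat\sigma_n/\sigma_n \CIP 1$. Slutsky's lemma then yields $\sqrt{n}\{g(\hat{\bmtheta})-g(\bmtheta_0)\}/\hat\sigma_n \CID N(0,1)$.

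The event $c_{g,n}(\alpha/2) \le g(\bmtheta_0) \le c_{g,n}(1-\alpha/2)$ is equivalent to this studentized quantity lying between $-q_{1-\alpha/2}$ and $-q_{\alpha/2}$. Its probability therefore tends to $1-\alpha$.

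\textbf{Main obstacle.} The delicate step is the moment argument in (a): passing from TVM convergence to convergence of the posterior first and second moments. This requires that the limiting Gaussian $\pi^*(\bmh|\bmY_\infty,\bmxi_\infty)$ is centred at $\bmzero$ exactly, uniformly in $n$. A secondary point is the nondegeneracy $\nabla g(\bmtheta_0)\neq \bmzero$. I would make it an explicit implicit assumption: together with the uniform positive definiteness in Assumption~\ref{assumption:expansion}(iii), it keeps $\sigma_n^2$ bounded below, which is needed for the ratio $\hat\sigma_n/\sigma_n$.
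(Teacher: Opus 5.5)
Your proposal is correct and follows essentially the same route as the paper. You pass to the local parameter $\bmh$, use the TVM convergence of Theorem~\ref{thm:Post} together with $\sqrt{n}(\hat{\bmtheta}-\bmtheta_0)-\bmU_n=o_p(1)$ from Theorem~\ref{thm:LTE} to get $\bmH_n^{-1}(\hat{\bmtheta})-\bmH_n^{-1}(\bmtheta_0)=o_p(1)$, then invert and finish with the delta method and Slutsky. Your identity $\int(\bmh-\bar{\bmh})(\bmh-\bar{\bmh})^{\top}\pi^*\,d\bmh=\int\bmh\bmh^{\top}\pi^*\,d\bmh-\bar{\bmh}\bar{\bmh}^{\top}$ only streamlines the paper's six-term bound, since the cross terms vanish exactly when $\bar{\bmh}$ is the posterior mean of $\bmh$. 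Your explicit condition $\nabla g(\bmtheta_0)\neq\bmzero$ is one the paper uses implicitly when it divides by the delta-method standard deviation.
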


Theorems ~\ref{thm:LTE} and ~\ref{thm:LSI2} further imply that the posterior distribution from \eqref{eqn:calibrated-pseudo-posterior}, with the original sampling weights $w_i$ used in place of $\xi_i$, is asymptotically normal centered at $\bmtheta_0$, making it an asymptotically unbiased estimator. While its variance can be consistently estimated via post-processing based on the Fisher information for complete data \citep{williams2021bayesian}, this approach fails in the presence of missing data due to a circular dependency: consistent estimation of the Fisher information requires imputation, while imputation in turn depends on accurate estimation of the Fisher information. 

\subsection{Calibrated weights for the coverage proper Bayesian inference} 

The conclusion of Section \ref{sec:theorems} is that the asymptotic variance of $g(\hat{\bmtheta})$ depends on both posterior variance $\bmH_n(\hat{\bmtheta})$ and the Fisher information $\bmJ_n(\hat{\bmtheta})$. While the former is readily available from the MCMC samples, using the original sampling weights in \eqref{eqn:pseudo-posterior} solely relies on $\bmH_n(\hat{\bmtheta})$ and ignores $\bmJ_n(\hat{\bmtheta})$, which incorrectly quantifies the uncertainty of $g(\hat{\bmtheta})$.

To address this issue, we propose the following weight that results in coverage-proper Bayesian likelihood \citep{Monahan1992}.
\begin{definition} \label{def:proposed-weight}
Define the calibrated weight for $i$th observation of the weight-calibrated pseudo-likelihood function in \eqref{eqn:weight-aug-pseudo-likelihood} as:
\begin{equation}\label{eqn:cal-weight}
		\xi_{i} =  \frac{\sum_{j \in \mathcal{S}} w_{j}}{\sum_{j \in \mathcal{S}} w_{j}^{2}} w_{i}.
	\end{equation}
\end{definition}
\noindent Then, the following corollary holds for the pseudo-posterior in \eqref{eqn:calibrated-pseudo-posterior} paired with the calibrated weights in \eqref{eqn:cal-weight}.
\begin{corollary} 
	\label{corollary:weight} With the calibrated weight in \eqref{eqn:cal-weight}, the generalized information equality holds, i.e., $\bmH_n(\bmtheta_0) \bmJ_n^{-1}(\bmtheta_0) = \bmI_q$. Then, the asymptotic $(1-\alpha)$ critical bound in \eqref{eq:ConfidenceInterval} is given by 
	$$ c_{g,n}(\alpha) \equiv g(\hat{\bmtheta}) + q_\alpha \sqrt{ \left\{ \nabla_{\bmtheta} g(\hat{\bmtheta}) \right\}^{\top} \bmH_n^{-1}(\hat{\bmtheta}) \left\{ \nabla_{\bmtheta} g(\hat{\bmtheta}) \right\} / n }.$$
\end{corollary}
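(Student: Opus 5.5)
We prove the generalized information equality $\bmH_n(\bmtheta_0)=\bmJ_n(\bmtheta_0)$ by computing both matrices explicitly for $L_n(\bmtheta)=\sum_{i\in\mathcal{S}}\xi_i\log f(\bmy_i|\bmtheta)$ with the calibrated weights \eqref{eqn:cal-weight}. Throughout, write $c_n=\sum_j w_j/\sum_j w_j^2$, so that $\xi_i=c_n w_i$. We work under the design-based framing in which the weights $w_i=1/\pi_i$ and the sample are treated as fixed at the order of the expansion. We also use the population-level Bartlett identities for the correctly specified superpopulation model:
\[
\E\{\nabla\log f(\bmy_i|\bmtheta_0)\}=\bmzero,\qquad
\E\{\nabla\log f\,\nabla\log f^\top\}=-\E\{\nabla^2\log f\}=\bmI(\bmtheta_0).
\]
Here $\bmI(\bmtheta_0)$ is the unit Fisher information and the expectations are evaluated at $\bmtheta_0$.

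The first step computes the Hessian side. By linearity,
\[
-\E\{\nabla_{\bmtheta\bmtheta^\top}L_n(\bmtheta_0)\}=\Big(\sum_i\xi_i\Big)\bmI(\bmtheta_0)=c_n\Big(\sum_i w_i\Big)\bmI(\bmtheta_0).
\]
The second step computes the score variance. The units are independent across $i$, since Poisson-type designs give independent units conditionally on the weights. Each score term has mean zero, so the cross terms vanish and
\[
\E\{\bms_n\bms_n^\top\}=\sum_i\xi_i^2\bmI(\bmtheta_0)=c_n^2\Big(\sum_i w_i^2\Big)\bmI(\bmtheta_0).
\]
The third step equates the two. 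Substituting $c_n=\sum w_j/\sum w_j^2$ gives
\[
c_n^2\sum_i w_i^2=\frac{(\sum_j w_j)^2}{\sum_j w_j^2}=c_n\sum_i w_i .
\]
Hence both matrices equal $\{(\sum w_j)^2/\sum w_j^2\}\,\bmI(\bmtheta_0)$. After the common $1/n$ normalization implicit in Assumption \ref{assumption:expansion}(iii), this gives $\bmH_n(\bmtheta_0)\bmJ_n^{-1}(\bmtheta_0)=\bmI_q$. The scaling constant $c_n$ is exactly the one that makes the effective sample size $(\sum w)^2/\sum w^2$ (Kish) appear on both sides.

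For the second claim we plug into \eqref{eq:ConfidenceInterval}. By Theorem \ref{thm:LSI2}(a), $\bmH_n(\hat{\bmtheta})\bmH_n(\bmtheta_0)^{-1}\CIP\bmI_q$, and $\bmJ_n(\hat{\bmtheta})$ is consistent for $\bmJ_n(\bmtheta_0)=\bmH_n(\bmtheta_0)$. Therefore
\[
\bmH_n^{-1}(\hat{\bmtheta})\bmJ_n(\hat{\bmtheta})\bmH_n^{-1}(\hat{\bmtheta})=\bmH_n^{-1}(\hat{\bmtheta})+o_p(1).
\]
Replacing the sandwich by $\bmH_n^{-1}(\hat{\bmtheta})$ changes the standard error only by a factor $1+o_p(1)$. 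By Slutsky's theorem together with Theorem \ref{thm:LTE}, the coverage statement of Theorem \ref{thm:LSI2}(b) is preserved with the simplified bound.

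The main obstacle is justifying the second step under an informative design. The expectation there must be taken so that the $w_i$ act as constants multiplying independent mean-zero scores with common covariance $\bmI(\bmtheta_0)$. Under informative sampling, the conditional law of $\bmy_i$ given inclusion differs from $f(\cdot|\bmtheta_0)$, so this step can fail. I would first handle this by stating that $\E$ denotes the joint superpopulation–design expectation in which the weighted score is unbiased, with the cross terms controlled by independence of Poisson sampling. If exact equality fails, I would settle for the asymptotic version $\bmH_n\bmJ_n^{-1}\to\bmI_q$, which suffices for the critical bound. That version holds when the weights are approximately uncorrelated with the per-unit information, which makes $\sum\xi_i^2\bmI_i/\sum\xi_i\bmI_i\to\bmI_q$ by a law-of-large-numbers argument.
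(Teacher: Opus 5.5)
The paper gives no proof of this corollary (the supplement proves only Theorems 1--3). Your Kish-type computation is presumably the intended reasoning. Your second part, substituting $\bmJ_n=\bmH_n$ into \eqref{eq:ConfidenceInterval} via Theorem~\ref{thm:LSI2}(a) and Slutsky, is fine, given $\nabla g(\bmtheta_0)\neq 0$. The gap is the one you flagged, and your first proposed repair does not close it.

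Steps 1--2 need the sampled $\bmy_i$ to satisfy the superpopulation Bartlett identities while the $w_i$ act as constants, which is precisely a non-informative design. Consider instead the joint superpopulation--design expectation under Poisson sampling, with inclusion indicators $\delta_i$, $E(\delta_i\mid\bmy_i)=p(\bmy_i)$, $w_i=1/p(\bmy_i)$, and $c_n$ replaced by its limit. Write $u_i=\nabla_{\bmtheta}\log f(\bmy_i|\bmtheta_0)$. The cross terms do vanish, because $E(\delta_iw_iu_i)=E_f(u_i)=0$, and the Hessian side is $c\,N\,\mathcal{I}(\bmtheta_0)$. The diagonal of the score variance, however, is $c^2\sum_{i=1}^N E(\delta_iw_i^2u_iu_i^{\top})=c^2N\,E_f(w\,uu^{\top})$, not $c^2(\sum_{\mathcal{S}}w_i^2)\,\mathcal{I}(\bmtheta_0)$.

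Since $\sum_{\mathcal{S}}w_j\approx N$ and $\sum_{\mathcal{S}}w_j^2\approx\sum_{i=1}^N w_i$, we get $c_n\approx 1/E_f(w)$. Then $\bmH_n=\bmJ_n$ holds if and only if $E_f(w\,uu^{\top})=E_f(w)\,E_f(uu^{\top})$. That is, the weights must be uncorrelated with the score outer product, which is Kish's ``haphazard weights'' condition. This fails under exactly the informative designs the paper targets, e.g.\ its own simulation, where the size measure is built from $y_{i1},y_{i2},y_{i3}$. For $y\sim N(\theta,1)$ with $p(y)\propto e^{y}$, one gets $\bmJ_n=2\bmH_n$ asymptotically even after calibration.

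Nor can A1--A4 supply the missing piece. Relative to the raw-weight pseudo-likelihood, replacing $w_i$ by $\xi_i=c\,w_i$ multiplies $\bmH_n$ by $c$ and $\bmJ_n$ by $c^2$. So a scalar recalibration yields $\bmH_n\bmJ_n^{-1}=\bmI_q$ only if the raw-weight $\bmJ_n$ is already proportional to the raw-weight $\bmH_n$, with constant $\sum_{\mathcal{S}}w_j^2/\sum_{\mathcal{S}}w_j$; the assumptions impose nothing of the kind.

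Consequently your fallback is not something a law of large numbers delivers. The LLN only turns sample sums into population moments, whereas $\mathrm{Cov}_f(w,uu^{\top})\approx 0$ is a substantive restriction on the design: non-informativeness at the level of second moments of the score. Your ratio $\sum\xi_i^2\bmI_i/\sum\xi_i\bmI_i$ also uses one per-unit matrix for both the Hessian and the score variance. That is not legitimate once selection depends on $\bmy_i$, because the Bartlett identity fails conditionally on $w_i$.

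What your argument does prove, correctly, is the corollary under this additional hypothesis, which should be stated explicitly. It is exactly the condition under which the paper's Kish analogy is exact.
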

\noindent Corollary \ref{corollary:weight} implies that under the calibrated weight, the posterior variance $\bmH_n^{-1}(\hat{\bmtheta})$ alone can be used to construct the credible interval of $g(\bmtheta)$, without the explicit estimation of $\bmJ_n^{-1}(\bmtheta_0)$.

In the survey sampling context, the calibrated weight can be viewed in connection with Kish’s effective sample size \citep{kish1965survey}. When dividing the population into $H$ strata and taking $n_h$ elements from each stratum for $h=1,\ldots,H$, Kish's design effect is $\text{Deff} = n\sum \limits _{h=1}^{H}(n_{h}w_{h}^{2}) / \left( \sum \limits _{h=1}^{H}n_{h}w_{h} \right)^{2}$. By putting $n_{h}=1$ for all strata, \cite{kish1965survey} simplified the formula to $\text{Deff}= n \sum_{i \in \mathcal{S}} w_{i}^{2} / \left( \sum_{i \in \mathcal{S}} w_{i} \right)^{2}$ and derived Kish's effective sample size, $n_{\text{eff}}= (\sum_{i \in \mathcal{S}} w_{i})^{2} / \sum_{i \in \mathcal{S}} w_{i}^{2}$ equivalent to $n_{\text{eff}} = n / \text{Deff}$.

\begin{remark}
	If $\sum_{i \in \mathcal{S}} w_i \approx N$, the relation between the calibrated weights and a Kish's effective sample size $n_{\text{eff}}$ (with $n_{h}=1$ for $h=1,\ldots,H$) is expressed by $\xi_i = w_{i} \ (\sum_{i' \in \mathcal{S}} w_{i'} / \sum_{i' \in \mathcal{S}} w_{i'}^{2}) \ \approx \ n_{\text{eff}} ( w_i / N ).$ The effective sample size $n_{\text{eff}}$ represents the size of a simple random sample that would yield the same precision as the weighted sample. It quantifies the relative efficiency (or inefficiency) of the adopted sampling design compared with simple random sampling (SRS). The second term, $w_{i}/N$, reflects the relative contribution of sampled unit $i$ to the inference; under equal selection probabilities, this quantity reduces to  $1/n$. 
\end{remark} 

When the sample is drawn with SRS, the standard posterior inference in \eqref{eq:posterior_iid} without any weight adjustment is sufficient, since the sampling distribution is representative of the finite-population distribution and, consequently, the superpopulation distribution. In contrast, if one directly uses the pseudo-posterior distribution with the original sampling weights $w_i$ as in \eqref{eqn:pseudo-posterior}, the resulting posterior variance underestimates the frequentist sampling variance of the posterior mean. However, the suggested pseudo-posterior distribution with the calibrated weights automatically addresses this issue. 

\begin{remark}
	If the survey samples are drawn via SRS, so that $w_i = N/n$ for all $i$, then $\xi_i = 1$. In this case, the weight-calibrated posterior distribution $\pi(\bmtheta|\bmY_n, \bmxi_n)$ reduces to $\pi(\bmtheta | \bmY_n ) \propto \prod_{i=1}^n f( \bmy_i | \bmtheta ) \pi(\bmtheta)$, 
	i.e., the posterior inference is made with the standard posterior based on the unweighted likelihood. 
\end{remark}

\noindent Therefore, our proposed approach may be viewed as a safeguard that automatically adjusts to the sampling design, reducing to the usual posterior under SRS while providing the appropriate adjustment under informative sampling.

\section{Simulation Study} \label{sec:simulation}

In this section, we assess the finite-sample performance of the proposed methods through repeated simulation studies designed to corroborate our theoretical results and to evaluate the accuracy of finite-population inference using synthetic populations. 

\subsection{Simulation settings}

Our simulation setup is adopted from \cite{kim2021synthetic}, reflecting the skewed and irregular patterns inherent in business establishment surveys under an informative sampling design. We assume a finite population consisting of $N=10^6$ units with $p=4$ variables, generated from a superpopulation following a Gaussian mixture with three mixture components,  $f(\bmy_i) = \sum_{k=1}^3 \eta_{k} \N \left( \bmy_i; \bmmu_k, \bmSigma_k \right)$ where $\eta_k$, $\bmmu_k$, and $\bmSigma_k$ denote the mixture proportion, mean vector, and covariance matrix of the $k$th component, respectively. The left panel of Figure \ref{fig:density} displays the empirical distributions of the simulated finite population. We refer readers to the Online Supplement S3 for the simulation parameters used in this study. 

Given the finite population, we conducted $L=400$ repeated simulations. In each replication, a sample $\bmY_n$ was selected using SRS and Poisson sampling, respectively. For the Poisson sampling, the inclusion probability of unit $i$ was set as $p_i = s_i \big/ \sum_{i' =1}^N s_{i'}$ where the size measure for unit $i$ was constructed as $s_i = | 0.7 y^*_{i1}  +0.4 y^*_{i2} + 0.5 y^*_{i3} |$ where $y^*_{ij} \sim \N \left( \left\{ y_{ij} -\min(y_{1j},\ldots,y_{Nj}) +0.1 \right\}^{1/2}, b_j \right)$ with $b_1 = b_3 = 0.1$ and $b_2 = 0.01$. The corresponding sampling weight for unit $i$ is $w_i = 1 / p_i$. The center and right panels of Figure \ref{fig:density} display the empirical distributions of samples obtained under SRS and Poisson sampling with the size measures defined above, respectively. 

The Poisson sampling design used here is clearly informative, as evidenced by the discrepancy between the finite population and sample distributions. This discrepancy arises because the inclusion probabilities depend on survey variables, making the design non-ignorable except under highly restrictive conditions \citep{pfeffermann1993,sugden1984ignorable}.  

We also introduced artificial missing values under the missing-at-random assumption
to emulate the high missing rate commonly observed in establishment survey data \citep{kim2018JAS,kim2021synthetic}. Specifically, missingness was imposed on $y_{i2}$ and $y_{i3}$ with missing probabilities $1/\left\{2 + \exp(-y_{i1})\right\}$ and $1/\left\{2 + \exp(-y_{i4})\right\}$, respectively. In the resulting incomplete datasets of size $n=800$, an average of 51.6\% of sampled units contained at least one missing value, and 12.6\% had missing values in two survey variables.

\subsection{Synthetic data generation with a Dirichlet process mixture Gaussian model} 
\label{sec:simul_methods}

Using the simulated sample dataset $\bmY_n$, we generated synthetic populations of size $N=10^6$ from the estimated finite-population density, incorporating the sampling weights into the model. The data synthesis relies on the calibrated pseudo-posterior predictive distributions
$f( \tilde{\bmy} | \bmY_n, \bmxi_n ) = \int f( \tilde{\bmy} | \bmtheta ) \pi(\bmtheta|\bmY_n, \bmxi_n) \diff \bmtheta.$

Among many possible specifications for the density function $f(\bmy_i | \bmtheta)$ of observed records (i.e., synthetic records), a nonparametric Bayesian model provides a flexible framework for inference and imputation under informative sampling, paired with our calibrated pseudo-likelihood weights. Because the sample data do not conform to any simple parametric form, such as a normal distribution, we adopt a Dirichlet process (DP) Gaussian mixture model with truncation \citep{Sethuraman:1994, Ishwaran:2001}. 

We now describe how the proposed calibrated weights are incorporated into the DP Gaussian mixture model. Let the model parameters be $\bmmu_{1:K} = \{\bmmu_1,\ldots,\bmmu_k\}$, $\bmSigma_{1:K} = \{\bmSigma_1,\ldots,\bmSigma_k\}$, and $\bmeta = \{ \eta_1,\ldots,\eta_k \}$. The weight-calibrated pseudo-likelihood function in \eqref{eqn:weight-aug-pseudo-likelihood} can then be written as \begin{equation}  \label{eq:Likelihood}
	f(\bmY_n | \bmxi_n, \bmmu_{1:K}, \bmSigma_{1:K}, \bmeta ) = \prod_{i \in \mathcal{S}}  \left\{ \sum_{k=1}^{K} \eta_k \N \left( \bmy_i ; \bmmu_k, \bmSigma_k \right) \right\}^{\xi_i}.   
\end{equation} 
Introducing a latent variable $z_i$ that indicates the mixture component for unit $i$, Equation \eqref{eq:Likelihood} can be equivalently expressed as
$f(\bmY_n| \bmz_n ,\{\bmmu_k\}, \{\bmSigma_k\}) = \prod_{i \in \mathcal{S}} N \left( \bmy_i ; \bmmu_{z_i}, \bmSigma_{z_i} \right)^{\xi_i}$ and 
$f(\bmz_n | \bmeta) = \prod_{i \in \mathcal{S}} \left( \eta_{1}^{I(z_i=1)} ... \eta_{k}^{I(z_i=k)} \right)^{\xi_i}$ 
where $\bmz_n = \{ z_i : i \in \mathcal{S} \}$.  

The Gaussian parameters are assigned conjugate priors: 
$f(\bmmu_k|\bmSigma_k) \sim \mathcal{N}(\bmmu_0, h_0^{-1} \bmSigma_k)$ and $f(\bmSigma_k| \Phi)  \sim \IW(\zeta_0, \boldsymbol{\Phi})$ where $\boldsymbol{\Phi} = \textrm{diag}(\phi_1,...,\phi_p)$ and $f(\phi_j) \sim \text{Gamma}(a_{\phi}, b_{\phi})$ for $j=1,\ldots,p$. Here, IW denotes the inverse Wishart distribution. For the mixture proportions, we use a truncated stick-breaking representation: 
$\eta_{1}=\nu_1$, $\eta_{k}  = \nu_k \prod_{g=1}^{k-1} (1 - \nu_g)$ for $k=2,\ldots,K$, $\nu_k|\alpha \sim \textrm{Beta}(1,\alpha)$ for $k=1,...,K-1$, $\nu_{K}=1$, and $\alpha \sim \text{Gamma}(a_{\alpha}, b_{\alpha})$. Hyperparameters were chosen as follows: $\bmmu_0$ was set to the zero vector since all variables were mean-centered; $\zeta_0 = p+1$ to ensure a proper but weakly informative prior on $\boldsymbol{\Sigma}_k$; and $h_0 = 1$. We also set $a_\phi = b_\phi = a_\alpha = b_\alpha = 0.25$ to assign modest prior mass to moderate variance values. The truncation level was set to $K=10$, which provided sufficient flexibility to model the simulated data. 

Given a simulated sample dataset $\bmY_n$ containing missing values, we generated multiple synthetic populations using a Metropolis-within-Gibbs sampler for the DP Gaussian mixture model described above. For each simulation replication, we run 2,000 iterations of the MCMC algorithm. After removing the first 1,000 iterations as an initial burn-in period, we draw $N$ values of $\tilde{\bmy}$ as a synthetic population in every $t_\text{int}=10$ iterations, resulting in the $M$=100 synthetic populations. 

\begin{enumerate}
	\item For each $k=1,\ldots,K$, randomly draw $\bmSigma_k$ and $\bmmu_k$ from the conditional densities, 
	$f( \bmSigma_k | \cdots ) \sim \IW \left( \zeta_k,  \bmPhi_k  \right)$
	and $f( \bmmu_k | \bmSigma_k, \cdots ) \sim \N \left( \bmmu_k^*, \bmSigma_k^* \right)$
	where $N_k = \sum_{\{i : z_i=k\}} \ \xi_i$, $\bms_k = \sum_{\{i : z_i=k\}} \xi_i \bmy_i$, $\bmT_k = \sum_{\{i : z_i=k\}} \xi_i  ( \bmy_i- \bms_k/N_k ) ( \bmy_i- \bms_k/N_k )^\top$, $\zeta_k = \zeta_0 + N_k$, $\bmPhi_k = \bmPhi + \bmT_k + ( \bms_k/N_k - \bmmu_0 )  ( \bms_k/N_k - \bmmu_0 )^\top / ( 1/N_k + 1/h_0 )$, $\bmmu_k^* = ( \bms_k + h_0 \bmmu_0 ) / ( N_k + h_0 )$ and $\bmSigma_k^* = \bmSigma_k / (N_k+h_0)$.
	
	\item For each $k=1,\ldots,K-1$, randomly draw $\nu_{k}$ from
	$f( \nu_k | \cdots ) \sim \text{Beta} ( 1 + N_k, \alpha + \sum_{m = k+1}^K N_m )$, and set $\nu_K=1$. Then, $\log \eta_1 = \log \nu_1$ and $\log \eta_k = \log \nu_k + \sum_{m=1}^{k-1} \log (1-\nu_m)$ for $k=2,\ldots,K$.
	
	\item For $j=1,\ldots,p$, draw $\phi_j$ from
	$f( \phi_j | \cdots ) \sim \text{Gamma} (  a_\phi + K  \zeta_0/2, \ b_\phi + \sum_{k=1}^K \sigma_{k,j}^{-2}/2 )$
	where $\sigma_{k,j}^{-2}$ is the $j$-th diagonal element of  $\bmSigma_k^{-1}$.
	
	\item Randomly draw $\alpha$ from 
	$f( \alpha | \cdots ) \sim \text{Gamma}( a_\alpha + K - 1, b_\alpha - \log \eta_K )$.
	
	\item For each $i=1,\ldots,n$, randomly draw $z_i$ from $f(z_i|\cdots) \sim \text{Categorical}(\eta_{i1}^*,\ldots,\eta_{iK}^*)$
	where $ \eta_{ik}^* = \eta_{k} \ \N \left(\bmy_i ; \bmmu_k, \bmSigma_k \right) /  \{ \sum_{g=1}^K \eta_{g} \ \N \left(\bmy_i ; \bmmu_g, \bmSigma_g \right) \}$.
	
	\item For record $i$ with missing items, divide $\bmy_i$ into $\bmy_i = (\bmy_{i,\text{m}},\bmy_{i,\text{o}})$ where $\bmy_{i,\text{m}}$ and $\bmy_{i,\text{o}}$ denote items with missing and observed values, respectively. We also divide the corresponding mean vector and covariance matrix as $\bmmu_{z_i}^\top = (\bmmu_{i,\text{m}}^\top, \bmmu_{i,\text{o}}^\top)$ and $\bmSigma_{z_i} = \begin{pmatrix} \bmSigma_{i,\text{mm}} & \bmSigma_{i,\text{mo}} \\
		\bmSigma_{i,\text{mo}}^\top & 
		\bmSigma_{i,\text{oo}}
	\end{pmatrix}.$
	
	Then, randomly draw $\bmy_{i,\text{m}}$ from the conditional normal distribution $f( \bmy_{i,\text{m}} | \bmy_{i,\text{o}}, \cdots ) \sim \N (\bmmu_i^*, \bmSigma_i^* )$ where 
	$\bmmu_i^* = \bmmu_{i,\text{m}} + \bmSigma_{i,\text{mo}} \bmSigma_{i,\text{oo}}^{-1} (\bmy_{i,\text{o}} - \bmmu_{i,\text{o}})$ and 
	$\bmSigma_i^* = \bmSigma_{i,\text{mm}} - \bmSigma_{i,\text{mo}} \bmSigma_{i,\text{oo}}^{-1} \bmSigma_{i,\text{mo}}^\top$. 
	
	\item For every $t_\text{int}=10$ iterations, draw $N=10^6$ synthetic population units $\tilde{\bmy}_i$ with the following steps: 
	\begin{enumerate}
		\item For each $k$, compute $\tilde{N}_k = \text{round}(N \eta_k)$.
		\item Draw $\tilde{N}_k$ synthetic records from their posterior predictive distribution $f(\tilde{\bmy}_i|\cdots) \sim \sum_{k=1}^{K} \eta_k \N \left( \tilde{\bmy} ; \bmmu_k, \bmSigma_k \right)$. 
	\end{enumerate}
	
\end{enumerate}

Repeat Steps 1 through 7 for the total number of iterations. 

\subsection{Simulation results} 

We compared the performance of the proposed method with two alternatives currently used in practice: (1) synthetic data generation without incorporating survey weights, which relies on the i.i.d. likelihood function in $\eqref{eq:posterior_iid}$; and (2) synthetic data generation using the original sampling weights based on the pseudo-likelihood in \eqref{eqn:pseudo-posterior}. 

To evaluate the fit of DP Gaussian mixture models under the different weighting schemes, we examined inference for the superpopulation parameters by estimating $E(Y_j) = \sum_{k=1}^K \eta_k \mu_{k,j}$ for $j \in \{1,2,3,4\}$, and compared the corresponding posterior means and 95\% equal-tailed credible intervals across methods. 
Table \ref{tab:super-pop} summarizes the performance of the competing methods in the simulation study.
The posterior inference without incorporating weights (``no weight'') shows that failing to account for sampling weights leads to substantial bias. Although using the original sampling weights (``sampling weight'') yields approximately unbiased point estimates of the mean, it underestimates the associated posterior variance. As a result, both approaches produce invalid credible interval coverages. In contrast, the proposed method (``calibrated weight'') provides accurate inferences, achieving coverage close to the nominal level.

\begin{table}[t!] 
	\caption{The results for estimating the superpopulation parameters from repeated simulations with synthetic data models, using the posterior inference without using weights, using original sampling weights, and using calibrated weights. The results compare the bias, the mean squared error (MSE), and the 95\% equal-tailed credible interval coverage (Coverage).}
	\label{tab:super-pop}\par
	\begin{center}
		\renewcommand{\arraystretch}{0.8} 
		\begin{tabular}{|rcrcc|} \hline 
			&    & Bias  & MSE  & Coverage \\  
			[3pt] \hline
			& no weight & 0.40 & 0.17 & 0.045 \\
			$E(Y_1)$     & sampling weight & 0.00 & 0.01 & 0.030 \\
			& calibrated weight & 0.01 & 0.02 & 0.925 \\ 
			\hline
			& no weight & 0.25 & 0.08 & 0.095 \\
			$E(Y_2)$     & sampling weight & -0.03 & 0.01 & 0.180 \\ 
			& calibrated weight & 0.00 & 0.01 & 0.950 \\ 
			\hline
			& no weight & 0.32 & 0.12 & 0.110 \\
			$E(Y_3)$     & sampling weight & -0.01 & 0.01 & 0.303 \\
			& calibrated weight & 0.00 & 0.01 & 0.970 \\ 
			\hline
			& no weight & 0.45 & 0.23 & 0.068 \\
			$E(Y_4)$     & sampling weight & 0.01 & 0.02 & 0.043 \\
			& calibrated weight & 0.01 & 0.03 & 0.913 \\ 
			\hline
		\end{tabular}
	\end{center}
\end{table}

Next, we assessed how the accuracy of parameter estimation translates into synthetic data quality by examining finite-population inference based on the synthetic population. For each simulation replication, we followed the procedures described in Section \ref{sec:simul_methods} to generate $M=100$ multiple synthetic populations, denoted by $\tilde{\bmY}_N^{(m)}$ for $m=1,\ldots,M$. The finite population mean of the $j$th variable, $\overline{Y}_{N,j}$ for $j=1,\ldots,4$, was estimated as $\widehat{Y}_{N,j} = \sum_{m=1}^M \widehat{Y}_{N,j}^{(m)} / M$ where $\widehat{Y}_{N,j}^{(m)}$ denotes the population mean of $j$th variable in the $m$th synthetic population. The corresponding 95\% confidence interval was constructed as $\widehat{Y}_{N,j} \pm t_{0.025,M-1} \sqrt{ \widehat{V} ( \widehat{Y}_{N,j} ) }$, where the variance estimator for multiple synthetic populations \citep{raghunathan2003} is given by $\widehat{V} ( \widehat{Y}_{N,j} ) = (1 + 1/M) \sum_{m=1}^M  ( \widehat{Y}_{N,j}^{(m)} - \widehat{Y}_{N,j} )^2 / (M-1)$. Table \ref{tab:finite-inference} summarizes the results of this finite-population inference across the three methods. 

\begin{table}[t!] 
	\caption{The results for estimating the finite population means, using the data synthesis model without using weights, using original sampling weights, and using calibrated weights. The results are compared with the bias, the mean squared error (MSE), and the 95\% confidence interval coverage (Coverage).}
	\label{tab:finite-inference}\par
	\begin{center}
		\renewcommand{\arraystretch}{0.8} 
		\begin{tabular}{|rcrcc|} \hline 
			&    & Bias  & MSE  & Coverage \\  
			[3pt] \hline
			& no weight & 0.41 & 0.25 & 0.113 \\
			$\overline{Y}_{N,1}$    & sampling weight & 0.00 & 0.01 & 0.040 \\
			& calibrated weight & 0.01 & 0.02 & 0.880 \\
			\hline
			& no weight & 0.25 & 0.07 & 0.200 \\ 
			$\overline{Y}_{N,2}$    & sampling weight & -0.03 & 0.01 & 0.183 \\
			& calibrated weight & 0.00 & 0.01 & 0.933 \\ 
			\hline
			& no weight & 0.34 & 0.35 & 0.225 \\ 
			$\overline{Y}_{N,3}$    & sampling weight & -0.01 & 0.01 & 0.298 \\
			& calibrated weight & 0.00 & 0.02 & 0.928 \\  
			\hline
			& no weight & 0.46 & 0.23 & 0.128 \\ 
			$\overline{Y}_{N,4}$    & sampling weight & 0.00 & 0.02 & 0.045 \\
			& calibrated weight & 0.01 & 0.03 & 0.893 \\ 
			\hline
		\end{tabular}
	\end{center}
\end{table}

The results demonstrate that existing methods fail to produce valid synthetic data. Ignoring sampling weights leads to biased mean estimates and inaccurate confidence intervals for the finite population parameters, while using unadjusted weights substantially underestimates variance. Consequently, the resulting synthetic data are generated from a distribution that is overly concentrated around the mean, leading to severe undercoverage. In contrast, the proposed approach with calibrated weights achieves substantially improved coverage.

\section{Concluding Remarks} \label{sec:discussion}

In this article, we proposed a Bayesian model-based inference framework with incomplete survey data under informative sampling. The proposed method is computationally straightforward and sufficiently flexible to be integrated with a wide range of modeling strategies. Both theoretical investigations and simulation studies demonstrate its effectiveness for model-based inference tasks, particularly in applications such as missing data imputation and synthetic data generation. 

We conclude by outlining several directions for future research. First, an important extension is to incorporate the proposed framework into modern generative modeling approaches for synthetic data generation. Although generative models have received considerable attention for their ability to handle complex and high-dimensional data, existing methods typically lack a principled mechanism for incorporating sampling weights. The calibrated weights proposed in this work could be adapted to these generative models to fully leverage the information contained in sampling weights.

Second, while the proposed method substantially improves coverage compared to existing approaches, a modest gap from the nominal level remains. This discrepancy likely arises from several unresolved challenges, including the inherent clustering uncertainty in the Dirichlet process mixture model. A more refined theoretical analysis of this phenomenon, along with methodological developments to further improve calibration, represents an important direction for future work.

\par

\bibhang=1.7pc
\bibsep=2pt
\renewcommand\bibname{\large \bf References}
\expandafter\ifx\csname
natexlab\endcsname\relax\def\natexlab#1{#1}\fi
\expandafter\ifx\csname url\endcsname\relax
\def\url#1{\texttt{#1}}\fi
\expandafter\ifx\csname urlprefix\endcsname\relax\def\urlprefix{URL}\fi

\bibliographystyle{chicago}
\bibliography{paper-ref}

\vskip .65cm
\noindent
Ayat Almomani, Department of Statistics, Yarmouk University, Jordan
\vskip 2pt
\noindent
E-mail: ayat.momani@yu.edu.jo
\vskip 2pt

\noindent
Won Chang, Department of Statistics and the Institute for Data Innovation in Science, Seoul National University, Korea
\vskip 2pt
\noindent
E-mail: wonchang@snu.ac.kr

\noindent
Youngdeok Hwang, Paul H. Chook Department of Information Systems and Statistics, City University of New York, U.S.A.
\vskip 2pt
\noindent
E-mail: youngdeok.hwang@baruch.cuny.edu

\noindent
Young Min Kim, Department of Statistics, Kyungpook National University, Korea
\vskip 2pt
\noindent
E-mail: kymmyself@knu.ac.kr

\noindent
Hang J. Kim, Division of Statistics and Data Science, University of Cincinnati, U.S.A.
\vskip 2pt
\noindent
E-mail: hang.kim@uc.edu

\newpage

\begin{center}
	\setcounter{page}{1} 
	\setcounter{section}{0}
	\setcounter{equation}{0}
	\def\theequation{S.\arabic{equation}}
	\def\thesection{S.\arabic{section}}
	
	{\Large\bf Supplementary Material: \\ Synthetic Data Generation Under Informative Sampling \par}
	\vspace{1.5em}
	{\large Ayat Almomani$^{1}$, Won Chang$^{2}$, Youngdeok Hwang$^{3}$, Young Min Kim$^{4}$, and Hang J. Kim$^{*5}$ \par}
	\vspace{1em}
	{\small $^{1}$Yarmouk University, $^{2}$Seoul National University, $^{3}$The City University of New York, \\ $^{4}$Kyungpook National University, and $^{*5}$University of Cincinnati \par}

\end{center}

\vspace{3em}

	\section*{Proofs of Theorems}
	
	\setcounter{equation}{0}
	
	\subsection*{Proof of Theorem 1}
	
	Define the total variation of moments norm for any real-valued measurable function $f$ on $\mathcal{H}$ is 
	\[
	||f||_{\text{TVM}}\equiv \int_{\mathcal{H}}||\bmh||^2 |f(\bmh)|d\bmh. 
	\]

	Let
	\[
	\mathcal{H}_n = \{ \sqrt{n}(\bmtheta - \bmtheta_0 ) - \bmH_n(\bmtheta_0)^{-1}\bms_n(\bmtheta_0)/\sqrt{n} : \bmtheta \in \Theta \}. 
	\]
	Then,
	\[
	||\pi^*(\bmh|\bmY_n,\bmxi_n) - \pi^*(\bmh|\bmY_{\infty},\bmxi_{\infty})||_{\text{TVM}} =
	\int_{\mathcal{H}_n} ||\bmh||^2 \left| \pi^*(\bmh | \bmY_n,\bmxi_n) - \pi^*(\bmh | \bmY_\infty,\bmxi_\infty) \right| d \bmh.
	\]
	Hence, Lemma~\ref{lm:conv_total_var} implies that 
	\[
	||\pi^*(\bmh|\bmY_n,\bmxi_n) - \pi^*(\bmh|\bmY_{\infty},\bmxi_{\infty})||_{\text{TVM}} \CIP 0 \quad \text{as} \quad \nri.
	\]

	\subsection*{Proof of Theorem 2}
	
	Note that $\bmH_n(\bmtheta)\equiv - \E \left[\nabla_{\bmtheta\bmtheta^T}L_n(\bmtheta) \right]$ and $\bmV(\bmtheta) \equiv \E \left[ 
	\bms_n(\bmtheta) \bms_n(\bmtheta)^T \right]$ which do not depend on the sample size $n$. Recall
	\[
	\bmh = \sqrt{n}(\bmtheta-\bmtheta_0) - \bmH_n(\bmtheta_0)^{-1}\bms_n(\bmtheta_0)/\sqrt{n} \quad \text{and} \quad  \bmU_n=n^{-1/2}\bmH_n(\bmtheta_0)^{-1}\bms_n(\bmtheta_0).
	\]
	Consider the convex objectives
	$Q_n(\bmz) = \int_{\mathcal{H}_n} (\bmz-\bmh-\bmU_n)^2 \pi^{*}(\bmh|\bmY_n,\bmxi_n)d\bmh$ and
	$Q_{\infty}(\bmz) = \int_{\mathbb{R}^q} (\bmz-\bmh-\bmU_n)^2$ 
	$\pi^{*}(\bmh|\bmY_{\infty},\bmxi_{\infty})d\bmh$,
	which are minimized at $\sqrt{n}(\hat{\bmtheta}-\bmtheta_0)$ and a random vector $\bmZ_n$, respectively where $\mathcal{H}_n = \{ \sqrt{n}(\bmtheta - \bmtheta_0 ) - \bmH_n(\bmtheta_0)^{-1}\bms_n(\bmtheta_0)/\sqrt{n} : \bmtheta \in \Theta \}$. By Lemma~\ref{lm06} (Anserson's inequality),
	\[
	\arg\inf_{\bmz \in \mathbb{R}^q} \left\{ \int_{\mathbb{R}^q} (\bmz-\bmh)^2 \pi^{*}(\bmh|\bmY_{\infty},\bmxi_{\infty})d\bmh  \right\}=0.
	\]
	Hence, we have 
	\begin{equation}\label{eq:zn}
		\bmZ_n =\bmU_n+O_p(1). 
	\end{equation} 
	
	To show 
	\[
	\sup_{\bmz \in \mathbb{R}^q}\left| Q_n(\bmz)-Q_{\infty}(\bmz) \right| \CIP 0  \quad \text{as} \quad \nri,
	\]
	we should first prove, for ant fixed $\bmz$, 
	\[
	\left| Q_n(\bmz)-Q_{\infty}(\bmz) \right| \CIP 0 \quad \text{as} \quad \nri. 
	\]
	Thus, we decompose it as 
	\begin{eqnarray*}
		\left| Q_n(\bmz)-Q_{\infty}(\bmz) \right| & \leq  & \int_{\mathcal{H}_n} (\bmz-\bmh-\bmU_n)^2 \left|\pi^{*}(\bmh|\bmY_n,\bmxi_n)-\pi^{*}(\bmh|\bmY_{\infty},\bmxi_{\infty}) \right|d\bmh \\
		&  & \quad \quad + \int_{\mathcal{H}_n^c} (\bmz-\bmh-\bmU_n)^2 \pi^{*}(\bmh|\bmY_{\infty},\bmxi_{\infty})d\bmh \\
		&\equiv& B_{1n}+B_{2n}
	\end{eqnarray*}
	where $\mathcal{H}_n^c$ is the complement of $\mathcal{H}_n$. Using $(a+b)^2 \leq 2a^2+2b^2$, we have 
	\begin{eqnarray*}
		B_{1n}&\leq& \int_{\mathcal{H}_n} \left\{2(\bmz-\bmU_n)^2 +2\bmh^2  \right\} \left|\pi^{*}(\bmh|\bmY_n,\bmxi_n)-\pi^{*}(\bmh|\bmY_{\infty},\bmxi_{\infty}) \right|d\bmh \\
		& =& 2(\bmz-\bmU_n)^2 \int_{\mathcal{H}_n} \left|\pi^{*}(\bmh|\bmY_n,\bmxi_n)-\pi^{*}(\bmh|\bmY_{\infty},\bmxi_{\infty}) \right|d\bmh \\
		& & +2\int_{\mathcal{H}_n} \bmh^2  \left|\pi^{*}(\bmh|\bmY_n,\bmxi_n)-\pi^{*}(\bmh|\bmY_{\infty},\bmxi_{\infty}) \right|d\bmh\\
		&= &  o_p(1) + o_p(1) = o_p(1)
	\end{eqnarray*}
	by the proof of Lemma~\ref{lm:conv_total_var} with $\alpha=0$ and $\alpha=2$ using Lemma~\ref{lm03}. Since the Lebesgue measure of $\mathcal{H}_n^c$ converges to zero, $B_{2n}=o_p(1)$ as $\nri$. Hence, as $\nri$,
	\[
	\left| Q_n(\bmz)-Q_{\infty}(\bmz) \right| \CIP 0. 
	\]
	Since $Q_n(\bmz)$ and $Q_{\infty}(\bmz)$ are convex and finite, by equation \eqref{eq:zn}, we can re-express
	\[
	\bmZ_n =\arg\int_{\bmz\in\mathbb{R}^q} Q_{\infty}(\bmz) =O_p(1). 
	\]
	By Lemma~\ref{lm07}, the pointwise convergence entails the uniform convergence over compact sets $\mathbb{K}$: 
	\[
	\sup_{\bmz \in \mathbb{K}} \left| Q_n(\bmz)-Q_{\infty}(\bmz) \right| \CIP 0 \quad \text{as} \quad \nri. 
	\]
	Since $\bmZ_n=O_p(1)$, uniform convergence and convexity arguments imply that $\sqrt{n}(\hat{\bmtheta}-\bmtheta_0) - \bmZ_n \CIP 0$ by Lemma~\ref{lm05}. 
	
	The proof is complete.

	\subsection*{Proof of Theorem 3}
	
	Recall that $\bmh = \sqrt{n}\left(\bmtheta -\bmtheta_0  \right) - \bmU_n$ and $\bmU_n = \frac{\bmH_n(\bmtheta_0)^{-1}\bms_n(\bmtheta_0)}{\sqrt{n}}$, so that 
	\[
	\bmtheta= \bmtheta_0 + \frac{\bmh+\bmU_n}{\sqrt{n}}.
	\]
	
	A change of variables gives
	\begin{equation}\label{eq:Jhat}
		\bmH_n^{-1}(\hat{\bmtheta}) = \int_{\mathcal{H}_n} (\bmh-\bmT_n)(\bmh-\bmT_n)^T\pi^{*}\left(\bmh|\bmY_{n},\bmxi_{n} \right) d\bmh, \quad \text{and} \quad \bmT_n \equiv \sqrt{n}(\hat{\bmtheta}_n-\bmtheta_0) - \bmU_n,
	\end{equation}
	where $\hat{\bmtheta}_n$ is a posterior mean. Likewise, the limiting curvature inverse has the representation as 
	\begin{equation}\label{eq:Jn}
		\bmH_n(\bmtheta_0)^{-1} = \int \bmh \bmh^T \pi^{*}\left(\bmh|\bmY_{\infty},\bmxi_{\infty} \right) d\bmh. 
	\end{equation}
	By Equation~\eqref{eq:Un}, we have $\sqrt{n}(\hat{\bmtheta}_n-\bmtheta_0)=\bmU_n+o_p(1)$. Hence, we also have $\bmT_n=o_p(1)$.

	In order to show $\bmH_n(\hat{\bmtheta})^{-1} - \bmH_n(\bmtheta_0)^{-1} \CIP 0$ as $\nri$, we first fix a truncation set $\mathcal{H}_n=\left\{||\bmh||\leq n^{1/6}  \right\}$. By using \eqref{eq:Jhat}, and \eqref{eq:Jn},  we decompose $|\bmH_n(\hat{\bmtheta})^{-1} - \bmH_n(\bmtheta_0)^{-1} |$ into the following six integrals:
	\begin{eqnarray*}
		&  & \Bigg| \int_{\mathcal{H}_n} \left(\bmh\bmh^T-2\bmT_n\bmh^T+\bmT_n\bmT_n^T\right)\pi^{*}\left(\bmh|\bmY_{n},\bmxi_{n} \right) d\bmh - \int \bmh \bmh^T \pi^{*}\left(\bmh|\bmY_{\infty},\bmxi_{\infty} \right) d\bmh    \Bigg|  \\
		& \leq &  \Bigg| \int_{\mathcal{H}_n}  \bmh\bmh^T \left\{\pi^{*}\left(\bmh|\bmY_{n},\bmxi_{n} \right)-  \pi^{*}\left(\bmh|\bmY_{\infty},\bmxi_{\infty} \right) \right\}    d\bmh \Bigg|+\Bigg| \int_{\mathcal{H}_n^c} \bmh \bmh^T \pi^{*}\left(\bmh|\bmY_{\infty},\bmxi_{\infty} \right) d\bmh  \Bigg| \\
		&      & + \Bigg| \int_{\mathcal{H}_n}  \bmT_n\bmT_n^T \left\{\pi^{*}\left(\bmh|\bmY_{n},\bmxi_{n} \right)-  \pi^{*}\left(\bmh|\bmY_{\infty},\bmxi_{\infty} \right) \right\}        d\bmh   \Bigg| + \Bigg|  \int_{\mathcal{H}_n}  \bmT_n\bmT_n^T \pi^{*}\left(\bmh|\bmY_{\infty},\bmxi_{\infty} \right) d\bmh  \Bigg| \\
		&     &  + \Bigg| \int_{\mathcal{H}_n}  \bmh\bmT_n^T \left\{\pi^{*}\left(\bmh|\bmY_{n},\bmxi_{n} \right)-  \pi^{*}\left(\bmh|\bmY_{\infty},\bmxi_{\infty} \right) \right\}        d\bmh   \Bigg| + \Bigg|  \int_{\mathcal{H}_n}  \bmh\bmT_n^T \pi^{*}\left(\bmh|\bmY_{\infty},\bmxi_{\infty} \right) d\bmh  \Bigg|\\
		&\equiv& A_{n1}+A_{n2}+A_{n3}+A_{n4}+A_{n5}+A_{n6}.
	\end{eqnarray*}
	By Lemma~\ref{lm:conv_total_var}, $A_{n1}=o_p(1)$. Since $\pi^{*}\left(\bmh|\bmY_{\infty},\bmxi_{\infty} \right)$ is Gaussian and $P(||\bmh||\geq 1/6)=O(n^{-1/3})$ by the Chebyshev's inequality, we can obtain $A_{n2}=o_p(1)$. $\bmT_n=o_p(1)$ and the process of the proof in Lemma~\ref{lm:conv_total_var} lead to $A_{n3}=o_p(1)$. In addition, 
	by $\bmT_n=o_p(1)$ and the dominated convergence theorem, we obtain $A_{n4}=o_p(1)$. By the Cauchy-Schwarz inequality, we decompose $A_{n5}$  into 
	\begin{eqnarray*}
		A_{n5} &\leq& \frac{1}{2}\Bigg| \int_{\mathcal{H}_n}  \bmh\bmh^T \left\{\pi^{*}\left(\bmh|\bmY_{n},\bmxi_{n} \right)-  \pi^{*}\left(\bmh|\bmY_{\infty},\bmxi_{\infty} \right) \right\}        d\bmh   \Bigg| \\
		& & + \frac{1}{2}\Bigg| \int_{\mathcal{H}_n}  \bmT_n\bmT_n^T \left\{\pi^{*}\left(\bmh|\bmY_{n},\bmxi_{n} \right)-  \pi^{*}\left(\bmh|\bmY_{\infty},\bmxi_{\infty} \right) \right\}        d\bmh   \Bigg|. 
	\end{eqnarray*}
	By the results of $A_{n1}$ and $A_{n3}$, $A_{n5}=o_p(1)$. By the symmetric property of a Gaussian density, we have $\int_{\mathcal{H}_n} ||\bmh||\pi^{*}\left(\bmh|\bmY_{\infty},\bmxi_{\infty} \right) d\bmh=o(1) $ and $\bmT_n=o_p(1)$. Thus, we have $A_{n6}=o_p(1)$. Therefore, as $\nri$,
	\[
	\bmH_n(\hat{\bmtheta})^{-1} - \bmH_n(\bmtheta_0)^{-1}=o_p(1).
	\]
	Since the inversion on the cone of positive definite matrices is continuous, we conclude that as $\nri$
	\begin{equation}\label{eq:JhatJ}
		\bmH_n(\hat{\bmtheta}) \bmH_n(\bmtheta_0)^{-1} \CIP \bmI_q. 
	\end{equation}
	
	By Theorem~2, we have 
	\[
	\sqrt{n} \left( \hat{\bmtheta} -\bmtheta_0 \right) \CID N\left(\bmzero, \bmH_n(\bmtheta_0)^{-1}\bmJ_n(\bmtheta_0)\bmH_n(\bmtheta_0)^{-1} \right) \quad \text{as} \quad \nri. 
	\]
	Since the continuity of $g(\bmtheta)$ implies that $g(\hat{\bmtheta}) \CIP g(\bmtheta)$ and $\nabla g(\hat{\bmtheta}) \CIP g(\bmtheta_0)$ by Theorem~2, the delta method with $g(\bmtheta)$ leads to 
	\begin{equation}\label{eq:delta}
		\frac{\sqrt{n}\left\{g(\hat{\bmtheta}) - g(\bmtheta_0) \right\} }{\sqrt{\nabla g(\bmtheta_0)^T\bmH_n(\bmtheta_0)^{-1}\bmJ_n(\bmtheta_0)\bmH_n(\bmtheta_0)^{-1} \nabla g(\bmtheta)0) }} \CID N(0,1) \quad \text{as} \quad \nri.
	\end{equation}
	Thus, we also have 
	\begin{align*}
		\nabla
		&  g(\hat{\bmtheta})^T\hat{\bmH}_n(\bmtheta_0)^{-1}\bmJ_n(\hat{\bmtheta})\bmH_n(\hat{\bmtheta})^{-1} \nabla g(\hat{\bmtheta}) \\
		& \CIP \nabla g(\bmtheta_0)^T\bmH_n(\bmtheta_0)^{-1}\bmJ_n(\bmtheta_0)\bmH_n(\bmtheta_0)^{-1} \nabla g(\bmtheta_0) 
	\end{align*}
	as $\nri$. Therefore the Wald bound, 
	\[
	c_{g,n}(\alpha) = g(\hat{\bmtheta}) + q_{\alpha} \sqrt{\frac{g(\hat{\bmtheta})^T\bmH_n(\hat{\bmtheta})^{-1}\bmJ_n(\hat{\bmtheta})\bmH_n(\hat{\bmtheta})^{-1} \nabla g(\hat{\bmtheta})}{n}},
	\]
	satisfies 
	\[
	P\left\{ c_{g,n}(\alpha/2) \le g(\bmtheta_0) \le c_{g,n}(1-\alpha/2) \right\} \rightarrow 1-\alpha.
	\]
	by Equation \eqref{eq:delta} and the Slutsky's theorem. 
	
	The proof is complete.

	\section*{Lemmas }
	
	\setcounter{equation}{0}
	
	\begin{lemma}\label{lm01} Under Assumption A1, Assumption A2 holds if  the poputation criterion $M_n(\bmtheta)\equiv \E\left[ \xi_i\log f(\bmy_i|\bmtheta)\right] $ satisfies the following conditions:
		\begin{enumerate}
			\item[(i)] $M(\bmtheta)$ is continuous on $\Theta$,
			\item[(ii)] for any $\delta>0$, 
			\[
			\limsup_{n \rightarrow \infty} \Bigg[ \sup_{||\bmtheta-\bmtheta_0||>\delta} \Bigg\{M_n(\bmtheta)-M_n(\bmtheta_0)\Bigg\} \Bigg] <0, 
			\]
			\item[(iii)] $n^{-1}L_n(\bmtheta) - M_n(\bmtheta)$ converges to zero in probability uniformly over $\Theta$.
		\end{enumerate} 
	\end{lemma}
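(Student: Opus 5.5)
This is the standard uniform-convergence argument for extremum identification. The goal is to show that for every $\delta>0$ there is $\varepsilon>0$ with
\[
P\left[\sup_{\|\bmtheta-\bmtheta_0\|\ge\delta} n^{-1}\{L_n(\bmtheta)-L_n(\bmtheta_0)\}\le-\varepsilon\right]\to 1.
\]
Throughout, read $M_n(\bmtheta)$ as $n^{-1}\E\{L_n(\bmtheta)\}$, the population counterpart of $n^{-1}L_n$ appearing in (iii). Read (i) as continuity of $M_n$ on $\Theta$, uniformly in $n$, or of its limit $M$.

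\emph{Step 1: the deterministic separation gap.} Fix $\delta>0$. By condition (ii), choose $\eta>0$ and $n_0$ such that for all $n\ge n_0$,
\[
\sup_{\|\bmtheta-\bmtheta_0\|\ge\delta}\{M_n(\bmtheta)-M_n(\bmtheta_0)\}\le -2\eta .
\]
Condition (ii) is stated with a strict inequality $>\delta$. To pass to the closed set $\{\|\bmtheta-\bmtheta_0\|\ge\delta\}$, apply (ii) with $\delta/2$ in place of $\delta$. Alternatively, use continuity (i) together with compactness of $\Theta$ from Assumption A1: the set $\{\bmtheta\in\Theta:\|\bmtheta-\bmtheta_0\|\ge\delta\}$ is compact, so the supremum is attained and the gap is preserved. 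This is the only role of (i) and A1.

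\emph{Step 2: the stochastic remainder.} Write $\Delta_n=\sup_{\bmtheta\in\Theta}|n^{-1}L_n(\bmtheta)-M_n(\bmtheta)|$. By (iii), $\Delta_n\CIP 0$, so $P(\Delta_n\le\eta/2)\to1$.

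\emph{Step 3: combining.} On the event $\{\Delta_n\le\eta/2\}$, for every $\bmtheta$ with $\|\bmtheta-\bmtheta_0\|\ge\delta$,
\[
n^{-1}\{L_n(\bmtheta)-L_n(\bmtheta_0)\}\le \{M_n(\bmtheta)-M_n(\bmtheta_0)\}+2\Delta_n\le -2\eta+\eta=-\eta .
\]
Taking the supremum over such $\bmtheta$ gives Assumption A2 with $\varepsilon=\eta$, since the probability of this event tends to one and so the $\liminf$ equals one.

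\emph{Main obstacle.} The only delicate point is Step 1: the gap must hold uniformly in $n$ on a closed set. The plan handles this by shrinking $\delta$ in (ii) and, if needed, invoking compactness and continuity. The rest is routine.
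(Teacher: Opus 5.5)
Your proposal is correct and follows the paper's proof essentially step for step. Both use the same three-term decomposition of $n^{-1}\{L_n(\bmtheta)-L_n(\bmtheta_0)\}$, the uniform bound $2\sup_{\Theta}|n^{-1}L_n-M_n|$ from (iii), and the separation gap from (ii); your pairing of $\Delta_n\le\eta/2$ with a gap of $2\eta$ plays the role of the paper's choice $\varepsilon'=\varepsilon''/3$. Your device of applying (ii) with $\delta/2$ to cover the closed region $\|\bmtheta-\bmtheta_0\|\ge\delta$ in A2 handles a mismatch that the paper silently glosses over, and it also makes your compactness/continuity alternative unnecessary.
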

	
	\bigskip 
	
	\noindent \textbf{Proof)} We begin by decomposing the term as
	\begin{eqnarray}\label{eq:De_LLD}
		\frac{1}{n} \Bigg\{ L_n(\bmtheta)- L_n(\bmtheta_0) \Bigg\} &=& \Bigg[ \frac{1}{n} L_n(\bmtheta) - M_n(\bmtheta) \Bigg] - \Bigg[ \frac{1}{n} L_n(\bmtheta_0) - M_n(\bmtheta_0) \Bigg] \\ \nonumber 
		& & + \Bigg[ M_n(\bmtheta) - M_n(\bmtheta_0) \Bigg].
	\end{eqnarray}
	Condition (iii) implies that for any $\varepsilon' > 0$,
	\[
	\sup_{\bmtheta \in \Theta} \Bigg|\frac{1}{n}L_n(\bmtheta) - M_n(\bmtheta)\Bigg| = o_p(1).
	\]
	Therefore, with probability approaching 1, we can bound the first two bracketed terms of the right-hand side of Equation~\eqref{eq:De_LLD} as
	\[
	\Bigg| \Bigg[ \frac{1}{n} L_n(\bmtheta) - M_n(\bmtheta) \Bigg] - \Bigg[ \frac{1}{n} L_n(\bmtheta_0) - M_n(\bmtheta_0) \Bigg] \Bigg| \leq 2 \sup_{\bmtheta \in \Theta} \Bigg|\frac{1}{n}L_n(\bmtheta) - M_n(\bmtheta)\Bigg| < 2\varepsilon'.
	\]
	
	Condition (ii) states that for any $\delta > 0$, there exists a constant $\varepsilon'' > 0$ such that for sufficiently large $n$:
	\begin{equation}\label{eq:Mbound}
		\sup_{||\bmtheta-\bmtheta_0||>\delta} \Bigg\{M_n(\bmtheta)-M_n(\bmtheta_0)\Bigg\} \le - \varepsilon'',
	\end{equation}
	which shows that on the parameter space outside the $\delta$-ball around $\bmtheta_0$, the expected log-likelihood difference is uniformly negative and bounded away from zero.
	
	Hence, combining the bounds from Equation~\eqref{eq:Mbound}, for a given $\delta > 0$ and any $\varepsilon' > 0$, we have with probability approaching 1:
	\begin{align*}
		\sup_{||\bmtheta-\bmtheta_0||>\delta} \frac{1}{n} \Bigg\{ L_n(\bmtheta)- L_n(\bmtheta_0) \Bigg\} &\leq \sup_{||\bmtheta-\bmtheta_0||>\delta} \Bigg[ M_n(\bmtheta) - M_n(\bmtheta_0) \Bigg] + 2 \sup_{\bmtheta \in \Theta} \Bigg|\frac{1}{n}L_n(\bmtheta) - M_n(\bmtheta)\Bigg| \\
		&< -\varepsilon'' + 2\varepsilon'.
	\end{align*}
	By choosing $\varepsilon' = \varepsilon''/3$, the right-hand side becomes $-\varepsilon''/3$. Thus, for any $\delta>0$, there exists an $\varepsilon = \varepsilon''/3 > 0$ such that the average log-likelihood difference is strictly negative and bounded away from zero, with a probability that converges to 1. This is precisely the statement of Assumption A2.

	\newpage 
	
	\begin{lemma}\label{lm02}  Under Assumptions A1 and A2, Assumption A3 holds with 
		\[
		\bms_n(\bmtheta_0) = \nabla_{\bmtheta}L_n(\bmtheta_0) \quad \text{and} \quad \bmH_n(\bmtheta_0) =- \nabla_{\bmtheta\bmtheta^T}M_n(\bmtheta) = - \E\left[ \nabla_{\bmtheta\bmtheta^T}\xi_i\log f(\bmy_i|\bmtheta)\right]=O(1)
		\]
		if 
		\begin{enumerate} \itemsep 0.2cm
			\item[(i)] for some $\delta>0$, $L_n(\bmtheta)$ and $M_n(\bmtheta)$ are twice continuously differentiable in $\bmtheta$ when $||\bmtheta-\bmtheta_0||<\delta$, 
			\item[(ii)] there is $\bmJ_n(\bmtheta_0)=\E\left[\bms_n(\bmtheta)\bms_n(\bmtheta)^T \right]$ such that $\bmJ_n(\bmtheta_0)^{-1/2} \nabla_{\bmtheta} L_n(\bmtheta_0)/\sqrt{n} \stackrel{d}{\longrightarrow} N(0,\bmI_q)$, $\bmH_n(\bmtheta_0)=O(1)$ and $\bmJ_n(\bmtheta_0)=O(1)$ are uniforml positive definite, and
			\item[(iii)] for some $\delta>0$ and each $\varepsilon >0$ 
			\[
			\limsup_{n\rightarrow \infty} P^{*} \left\{ \sup_{||\bmtheta-\bmtheta_0||<\delta}  \Bigg| \frac{\nabla_{\bmtheta\bmtheta^T}L_n(\bmtheta)}{n} - \nabla_{\bmtheta\bmtheta^T}M_n(\bmtheta)  \Bigg|  \right\} = 0.   
			\]
		\end{enumerate} 
	\end{lemma}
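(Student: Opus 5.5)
The plan is to verify Assumption A3 piece by piece from conditions (i)--(iii). The main tool is a second-order Taylor expansion of $L_n$ around $\bmtheta_0$ with the remainder written in integral (mean-value) form. Parts (ii) and (iii) of Assumption A3 are essentially restatements of condition (ii) of Lemma~\ref{lm02}, once we identify $\bms_n(\bmtheta_0)=\nabla_{\bmtheta}L_n(\bmtheta_0)$ and $\bmH_n(\bmtheta_0)=-\nabla_{\bmtheta\bmtheta^T}M_n(\bmtheta_0)$. So the real content is part (i), the quadratic expansion together with the two remainder bounds (a) and (b).

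\textbf{Taylor expansion.} By condition (i), for $\|\bmtheta-\bmtheta_0\|<\delta$ we can write
\[
L_n(\bmtheta)-L_n(\bmtheta_0)=(\bmtheta-\bmtheta_0)^T\bms_n(\bmtheta_0)+\tfrac12(\bmtheta-\bmtheta_0)^T\nabla_{\bmtheta\bmtheta^T}L_n(\bar{\bmtheta})(\bmtheta-\bmtheta_0).
\]
Here $\bar{\bmtheta}$ lies on the segment between $\bmtheta$ and $\bmtheta_0$; alternatively one can use the integral form $\int_0^1(1-t)\nabla_{\bmtheta\bmtheta^T}L_n(\bmtheta_0+t(\bmtheta-\bmtheta_0))\,dt$, which avoids measurability issues with $\bar{\bmtheta}$. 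Matching this with the form required in Assumption A3(i) defines the remainder
\[
R_n(\bmtheta)=\tfrac12(\bmtheta-\bmtheta_0)^T\bigl\{\nabla_{\bmtheta\bmtheta^T}L_n(\bar{\bmtheta})+n\bmH_n(\bmtheta_0)\bigr\}(\bmtheta-\bmtheta_0).
\]

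\textbf{Bounding the remainder.} Dividing by $n\|\bmtheta-\bmtheta_0\|^2$ gives
\[
\frac{|R_n(\bmtheta)|}{n\|\bmtheta-\bmtheta_0\|^2}\le\tfrac12\Bigl\|\tfrac1n\nabla_{\bmtheta\bmtheta^T}L_n(\bar{\bmtheta})-\nabla_{\bmtheta\bmtheta^T}M_n(\bar{\bmtheta})\Bigr\|+\tfrac12\bigl\|\nabla_{\bmtheta\bmtheta^T}M_n(\bar{\bmtheta})-\nabla_{\bmtheta\bmtheta^T}M_n(\bmtheta_0)\bigr\|.
\]
The first term tends to zero in outer probability uniformly on the $\delta$-ball by condition (iii). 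The second term can be made smaller than $\varepsilon$ by shrinking $\delta$, using continuity of $\nabla_{\bmtheta\bmtheta^T}M_n$ from condition (i). Together these give (a).

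For (b), on $\|\bmtheta-\bmtheta_0\|\le M/\sqrt n$ the same bound gives
\[
|R_n(\bmtheta)|\le \tfrac{M^2}{2}\bigl\{o_{p^*}(1)+\omega(M/\sqrt n)\bigr\}\to0,
\]
where $\omega$ denotes the modulus of continuity of the Hessian of $M_n$.

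\textbf{Main obstacle.} The hard step is the continuity term: it needs $\nabla_{\bmtheta\bmtheta^T}M_n$ to be equicontinuous at $\bmtheta_0$ uniformly in $n$, since $M_n$ depends on $n$ through the weights $\xi_i$. I would get this by reading the twice-continuous differentiability in condition (i) as uniform in $n$, or by noting that $M_n$ stabilizes to a fixed limit $M$, as the proof of Theorem~2 implicitly treats $\bmH_n$ as not depending on $n$. Finally, Assumptions A1 and A2 guarantee that $\bmtheta_0$ is interior and that the local analysis on the $\delta$-ball suffices, which completes the verification of Assumption A3.
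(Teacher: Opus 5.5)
Your proposal is correct and follows the paper's proof essentially step for step. It uses the same mean-value Taylor expansion, the same remainder $R_n(\bmtheta)$, and the same split of $|R_n(\bmtheta)|/(n\|\bmtheta-\bmtheta_0\|^2)$ into a term controlled by condition (iii) and a Hessian-continuity term for $M_n$, with A3(ii)--(iii) read off directly from condition (ii). Your point that the continuity term needs $\nabla_{\bmtheta\bmtheta^T}M_n$ to be equicontinuous at $\bmtheta_0$ uniformly in $n$, and your modulus-of-continuity handling of part (b), make explicit an assumption the paper's proof uses silently, so this is a refinement rather than a departure.
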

	
	\bigskip 
	
	\noindent \textbf{Proof)} Condition (i) states that $L_n(\bmtheta)$ is twice continuously differentiable in a neighborhood of $\bmtheta_0$. By a second-order Taylor expansion around $\bmtheta_0$, we can write:
	\[
	L_n(\bmtheta) - L_n(\bmtheta_0) = (\bmtheta-\bmtheta_0)^T \nabla_{\bmtheta}L_n(\bmtheta_0) + \frac{1}{2}(\bmtheta-\bmtheta_0)^T \nabla_{\bmtheta\bmtheta^T}L_n(\bmtheta_{*}) (\bmtheta-\bmtheta_0)
	\]
	for some $\bmtheta_{*}$ between $\bmtheta$ and $\bmtheta_0$.
	Using the definitions $\bms_n(\bmtheta_0) = \nabla_{\bmtheta}L_n(\bmtheta_0)$ and $\bmH_n(\bmtheta_0) = - \nabla_{\bmtheta\bmtheta^T}M_n(\bmtheta_0)$, we can rearrange the second-order term to match the required form.
	\begin{eqnarray*}
		\frac{1}{2}(\bmtheta-\bmtheta_0)^T \nabla_{\bmtheta\bmtheta^T}L_n(\bmtheta_{*}) (\bmtheta-\bmtheta_0) &=& \frac{1}{2}(\bmtheta-\bmtheta_0)^T \left[ \nabla_{\bmtheta\bmtheta^T}L_n(\bmtheta_{*}) - n\nabla_{\bmtheta\bmtheta^T}M_n(\bmtheta_0) \right] (\bmtheta-\bmtheta_0) \\
		& & + \frac{1}{2}(\bmtheta-\bmtheta_0)^T n\nabla_{\bmtheta\bmtheta^T}M_n(\bmtheta_0) (\bmtheta-\bmtheta_0) \\
		&=&  - \frac{1}{2}(\bmtheta-\bmtheta_0)^T \left[ n\bmH_n(\bmtheta_0) \right] (\bmtheta-\bmtheta_0) + R_n(\bmtheta)
	\end{eqnarray*}
	where the remainder term is $R_n(\bmtheta) = \frac{1}{2}(\bmtheta-\bmtheta_0)^T \left[ \nabla_{\bmtheta\bmtheta^T}L_n(\bmtheta_{*}) - n\nabla_{\bmtheta\bmtheta^T}M_n(\bmtheta_0) \right] (\bmtheta-\bmtheta_0)$.
	This demonstrates that the formula of Assumption A3(i) holds.
	
	Assumptions A3(ii) and A3(iii) are directly provided by Condition (ii). The central limit theorem for the score function as
	\[
	\bmJ_n(\bmtheta_0)^{-1/2}\nabla_{\bmtheta} L_n(\bmtheta_0)/\sqrt{n} \stackrel{d}{\longrightarrow} N(0,\bmI_q)
	\]
	directly corresponds to Assumption A3(ii).
	$\bmH_n(\bmtheta_0)=O(1)$ and $\bmJ_n(\bmtheta_0)=O(1)$ are uniformly positive-definite constant matrices, which directly corresponds to Assumption A3(iii).
	
	Assumptions A3(i)(a) and A3(i)(b) require showing that $R_n(\bmtheta)$ from the Taylor expansion is negligible under certain conditions. The remainder term is defined as
	\[
	R_n(\bmtheta) = \frac{1}{2}(\bmtheta-\bmtheta_0)^T \left[ \nabla_{\bmtheta\bmtheta^T}L_n(\bmtheta_{*}) - n\nabla_{\bmtheta\bmtheta^T}M_n(\bmtheta_0) \right] (\bmtheta-\bmtheta_0).
	\]
	Let us analyze the term inside the bracket by applying Condition (iii), which states that for a small $\delta>0$ and each $\varepsilon>0$, $\sup_{||\bmtheta-\bmtheta_0||<\delta} \left| \frac{1}{n}\nabla_{\bmtheta\bmtheta^T}L_n(\bmtheta) - \nabla_{\bmtheta\bmtheta^T}M_n(\bmtheta) \right| = o_p(1)$.
	This implies that for any $\varepsilon'>0$,
	\[
	P^{*} \left\{ \sup_{||\bmtheta-\bmtheta_0||<\delta} \left| \frac{1}{n}\nabla_{\bmtheta\bmtheta^T}L_n(\bmtheta) - \nabla_{\bmtheta\bmtheta^T}M_n(\bmtheta) \right| > \varepsilon' \right\} \to 0 \quad \text{as } n\to\infty.
	\]
	By the continuity of the Hessian in Condition (i), we can also relate $\nabla_{\bmtheta\bmtheta^T}M_n(\bmtheta_*) - \nabla_{\bmtheta\bmtheta^T}M_n(\bmtheta_0)$ to be small for small $\delta$.
	\begin{align*}
		\frac{R_n(\bmtheta)}{n||\bmtheta-\bmtheta_0||^2} &= \frac{1}{2}\frac{(\bmtheta-\bmtheta_0)^T (\nabla_{\bmtheta\bmtheta^T}L_n(\bmtheta_{*}) - n\nabla_{\bmtheta\bmtheta^T}M_n(\bmtheta_0)) (\bmtheta-\bmtheta_0)}{n||\bmtheta-\bmtheta_0||^2} \\
		&= \frac{1}{2}\frac{(\bmtheta-\bmtheta_0)^T (\nabla_{\bmtheta\bmtheta^T}L_n(\bmtheta_{*}) - n\nabla_{\bmtheta\bmtheta^T}M_n(\bmtheta_{*}) ) (\bmtheta-\bmtheta_0)}{n||\bmtheta-\bmtheta_0||^2} \\
		&+\frac{1}{2}\frac{(\bmtheta-\bmtheta_0)^T ( n\nabla_{\bmtheta\bmtheta^T}M_n(\bmtheta_{*}) - n\nabla_{\bmtheta\bmtheta^T}M_n(\bmtheta_0)) (\bmtheta-\bmtheta_0)}{n||\bmtheta-\bmtheta_0||^2} \\
		&\leq \frac{1}{2}\left| \frac{\nabla_{\bmtheta\bmtheta^T}L_n(\bmtheta_{*})}{n} - \nabla_{\bmtheta\bmtheta^T}M_n(\bmtheta_{*}) \right| + \frac{1}{2}\Bigg| \nabla_{\bmtheta\bmtheta^T}M_n(\bmtheta_{*}) - \nabla_{\bmtheta\bmtheta^T}M_n(\bmtheta_0) \Bigg|.
	\end{align*}
	The first term on the right-hand side is $o_p(1)$ by Condition (iii). The second term goes to 0 as $||\bmtheta-\bmtheta_0|| \to 0$ due to the continuity of the Hessian of $M_n$ from Condition (i). This implies that, for a sufficiently small $\delta$, the second term can be made arbitrarily small. This, in conjunction with the result from Condition (iii), proves that the entire expression is $o_p(1)$. This is sufficient to prove both parts of Assumption A3(i). For Assumption A3(i)(a), since $|R_n(\bmtheta)| / ( n||\bmtheta-\bmtheta_0||^2 )$ is $o_p(1)$, for any $\varepsilon>0$, we can choose $\delta$ small enough and $n$ large enough so that the probability of this term being greater than $\varepsilon$ is less than $\varepsilon$. For the second part, note that $|R_n(\bmtheta)| \leq n||\bmtheta-\bmtheta_0||^2 \times o_p(1)$. On the set $||\bmtheta-\bmtheta_0||\leq M/\sqrt{n}$, this term is bounded by $n(M/\sqrt{n})^2 \times o_p(1) = M^2 \times o_p(1)$, which converges to 0 in probability, thus satisfying Assumption A3(i)(b).
	
	All four parts of Assumption A3 are thus shown to hold.


	\begin{lemma}\label{lm03} Suppose that Assumptions A1--A4 hold. Define 
		\begin{eqnarray}\label{eq:wn}
			w_n(\bmh) &=& \sum_{i\in\mathcal{S}} \xi_i \log f\left(\bmy_i\Bigg|\frac{\bmh}{\sqrt{n}}+\bmT_n\right) \\ \nonumber
			& & - \sum_{i\in\mathcal{S}} \xi_i \log f\left(\bmy_i|\bmtheta_0\right) -\frac{1}{2n}\bms_n(\bmtheta_0)^T\bmH_n(\bmtheta_0)^{-1}\bms_n(\bmtheta_0).
		\end{eqnarray}
		Then, for $\alpha \geq 0$, as $\nri$,
		\[
		\bmW_{n} \equiv \int_{\mathcal{H}_n} ||\bmh||^{\alpha} \Bigg|\exp\left(w_n(\bmh)\right)\pi\left(\frac{\bmh}{\sqrt{n}}+\bmT_n \right) - \exp\left(-\frac{1}{2}\bmh^T\bmH_n(\bmtheta_0)\bmh \right)\pi(\bmtheta_0)   \Bigg| d\bmh \CIP 0
		\]
		where $\mathcal{H}_n = \{ \sqrt{n}(\bmtheta - \bmtheta_0 ) - \bmH_n(\bmtheta_0)^{-1}\bms_n(\bmtheta_0)/\sqrt{n} : \bmtheta \in \Theta \}$, and $\bmT_n=\bmtheta_0+\bmU_n/\sqrt{n}$ with $\bmU_n=\bmH_n(\bmtheta_0)^{-1}\bms_n(\bmtheta_0)/\sqrt{n}$
	\end{lemma}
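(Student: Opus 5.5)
This is the Chernozhukov–Hong (2003) Lemma 1 argument. I read $\bmT_n$ in the statement as the centering point $\bmtheta_0+\bmU_n/\sqrt{n}$, so that $\bmh/\sqrt{n}+\bmT_n=\bmtheta_0+(\bmh+\bmU_n)/\sqrt{n}$ is the change of variables defining $\pi^*$.

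\textbf{Algebraic reduction.} Write $\bmtheta-\bmtheta_0=(\bmh+\bmU_n)/\sqrt{n}$ in the expansion of Assumption \ref{assumption:expansion}(i). With $\bms_n(\bmtheta_0)=\sqrt{n}\bmH_n(\bmtheta_0)\bmU_n$,
\[
(\bmtheta-\bmtheta_0)^{\top}\bms_n(\bmtheta_0)-\tfrac12(\bmh+\bmU_n)^{\top}\bmH_n(\bmtheta_0)(\bmh+\bmU_n)=-\tfrac12\bmh^{\top}\bmH_n(\bmtheta_0)\bmh+\tfrac12\bmU_n^{\top}\bmH_n(\bmtheta_0)\bmU_n .
\]
The last term equals $\frac{1}{2n}\bms_n^{\top}\bmH_n^{-1}\bms_n$, which is exactly the constant subtracted in $w_n$. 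Hence, on the neighbourhood where (i) applies,
\[
w_n(\bmh)=-\tfrac12\bmh^{\top}\bmH_n(\bmtheta_0)\bmh+R_n\bigl(\bmtheta_0+(\bmh+\bmU_n)/\sqrt{n}\bigr).
\]
By Assumption \ref{assumption:expansion}(ii)--(iii), $\bmU_n=O_p(1)$.

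\textbf{Splitting the integral.} Split $\mathcal{H}_n$ into three regions: $\mathcal{H}_{1}=\{\|\bmh\|\le M\}$, $\mathcal{H}_{2}=\{M<\|\bmh\|\le\delta\sqrt{n}\}$ and $\mathcal{H}_{3}=\{\|\bmh\|>\delta\sqrt{n}\}\cap\mathcal{H}_n$.

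On $\mathcal{H}_1$, $\|\bmtheta-\bmtheta_0\|\le (M+\|\bmU_n\|)/\sqrt{n}$, which is at most $M'/\sqrt{n}$ with high probability. So by (i)(b), $\sup|R_n|\to_p0$. Continuity of $\pi$ (Assumption \ref{assumption:4}) gives $\pi(\bmtheta)\to\pi(\bmtheta_0)$ uniformly there. On a compact region, both facts give that the integrand tends to $0$ uniformly, and the integral vanishes.

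On $\mathcal{H}_2$, use (i)(a): with probability at least $1-\varepsilon$,
\[
|R_n|\le \varepsilon\|\bmh+\bmU_n\|^2\le 2\varepsilon\|\bmh\|^2+2\varepsilon\|\bmU_n\|^2 .
\]
So $\exp(w_n)\le C\exp\{-\frac12\bmh^{\top}(\bmH_n-4\varepsilon\bmI)\bmh\}$. This is integrable against $\|\bmh\|^{\alpha}$ because $\bmH_n$ is uniformly positive definite. The prior is bounded near $\bmtheta_0$. Both terms of the difference are therefore bounded by $\int_{\|\bmh\|>M}\|\bmh\|^\alpha e^{-c\|\bmh\|^2}d\bmh$, which is small for large $M$.

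On $\mathcal{H}_3$, Assumption \ref{assumption:identifiability} gives $L_n(\bmtheta)-L_n(\bmtheta_0)\le -n\varepsilon$ with probability tending to one, once $\delta$ is replaced by $\delta/2$ to absorb $\bmU_n/\sqrt{n}$. The subtracted constant is $O_p(1)$. Hence
\[
\int_{\mathcal{H}_3}\|\bmh\|^\alpha e^{w_n}\pi\,d\bmh\le e^{-n\varepsilon+O_p(1)}\,n^{(\alpha+q)/2}\sup_{\Theta}\|\bmtheta-\bmtheta_0\|^{\alpha}\int_\Theta\pi\to_p0,
\]
using compactness of $\Theta$ (Assumption \ref{assumption:parameter}). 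The Gaussian term over $\mathcal{H}_3$ is $o(1)$ by its tails.

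\textbf{Conclusion and main obstacle.} Letting $n\to\infty$, then $M\to\infty$, then $\varepsilon\to0$ gives $\bmW_n\to_p0$. The main obstacle is the middle region $\mathcal{H}_2$. There one must control $R_n$ relative to $\|\bmh+\bmU_n\|^2$ rather than $\|\bmh\|^2$, and ensure the randomness of $\bmU_n$ does not break the domination. I would handle this by working on the event $\{\|\bmU_n\|\le K\}$, whose probability is at least $1-\varepsilon$, and shrinking $\delta$ so that (i)(a) applies uniformly.
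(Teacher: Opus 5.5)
Your proposal is correct and follows essentially the same route as the paper. It uses the same reduction $w_n(\bmh)=-\tfrac12\bmh^{\top}\bmH_n(\bmtheta_0)\bmh+R_n(\bmh/\sqrt{n}+\bmT_n)$ and the same three-region split ($\|\bmh\|\le M$, $M\le\|\bmh\|\le\delta\sqrt{n}$, $\|\bmh\|\ge\delta\sqrt{n}$), handled respectively by Assumption A3(i)(b) with continuity of the prior, by A3(i)(a) with a Gaussian envelope, and by A2 with compactness of $\Theta$. Your version is in fact more careful than the paper's in two places: you bound $R_n$ relative to $\|\bmh+\bmU_n\|^2$ on the event $\{\|\bmU_n\|\le K\}$, where the paper treats this shift as negligible, and you use the correct Jacobian factor $n^{(\alpha+q)/2}$ in the outer region, where the paper writes $n^{(\alpha+1)/2}$.
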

	\noindent \textbf{Proof)}  Using Assumption A3(i) to expand $w_n(\bmh)$ around $\bmtheta_0$ gives 
	\begin{eqnarray}\label{eq:wnr}
		w_n(\bmh)&=& \left(\frac{\bmh}{\sqrt{n}}+\bmT_n-\bmtheta_0 \right)^T\bms_n(\bmtheta_0) - \frac{1}{2} \left(\frac{\bmh}{\sqrt{n}}+\bmT_n-\bmtheta_0 \right)^T \left[n J(\bmtheta_0) \right]\left(\frac{\bmh}{\sqrt{n}}+\bmT_n-\bmtheta_0 \right) \nonumber\\
		& & - \frac{1}{2n} \bms_n(\bmtheta_0)^T \bmH_n(\bmtheta_0)^{-1}\bms_n(\bmtheta_0) + R_n\left(\frac{\bmh}{\sqrt{n}}+\bmT_n \right)\nonumber\\
		& =&  - \frac{1}{2}\bmh^T\bmH_n(\bmtheta_0)\bmh + R_n\left(\frac{\bmh}{\sqrt{n}}+\bmT_n \right)
	\end{eqnarray}
	with $\bmT_n-\bmtheta_0=\frac{1}{n} \bmH_n(\bmtheta_0)^{-1}\bms_n(\bmtheta_0)$. In addition, for $\delta>0$ and $0<M<\infty$ we can split $\bmW_n$ as
	\begin{eqnarray*}
		\bmW_n&=& \int_{||\bmh||\leq M} ||\bmh||^{\alpha} \Bigg|\exp\left(w_n(\bmh)\right)\pi\left(\frac{\bmh}{\sqrt{n}}+\bmT_n \right) - \exp\left(-\frac{1}{2}\bmh^T\bmH_n(\bmtheta_0)\bmh \right)\pi(\bmtheta_0)\Bigg| d\bmh \\
		& & + \int_{M\leq ||\bmh||\leq \delta\sqrt{n}} ||\bmh||^{\alpha} \Bigg|\exp\left(w_n(\bmh)\right)\pi\left(\frac{\bmh}{\sqrt{n}}+\bmT_n \right) - \exp\left(-\frac{1}{2}\bmh^T\bmH_n(\bmtheta_0)\bmh \right)\pi(\bmtheta_0)\Bigg| d\bmh \\
		& & + \int_{||\bmh||\geq \delta\sqrt{n}} ||\bmh||^{\alpha} \Bigg|\exp\left(w_n(\bmh)\right)\pi\left(\frac{\bmh}{\sqrt{n}}+\bmT_n \right) - \exp\left(-\frac{1}{2}\bmh^T\bmH_n(\bmtheta_0)\bmh \right)\pi(\bmtheta_0)\Bigg| d\bmh \\
		&\equiv& B_{1n} + B_{2n} +B_{3n}. 
	\end{eqnarray*}
	
	In order to show $B_{1n}=o_p(1)$, using Equation~\eqref{eq:wnr} we can derive 
	\begin{eqnarray*}
		B_{1n}&=& \sup_{||\bmh||\leq M} \left[ ||\bmh||^{\alpha} \Bigg|\exp\left( -\frac{1}{2}\bmh^T\bmH_n(\bmtheta_0)\bmh+R_n\left( \frac{\bmh}{\sqrt{n}}+\bmT_n \right) \right)\pi\left(\frac{\bmh}{\sqrt{n}}+\bmT_n \right) \right.\\
		& & \quad \quad \quad \quad    \left. - \exp\left(-\frac{1}{2}\bmh^T\bmH_n(\bmtheta_0)\bmh \right)\pi(\bmtheta_0)\Bigg| \right] \\
		&\leq & M^{\alpha} \sup_{||\bmh||\leq M}\exp\left( -\frac{1}{2}\bmh^T \bmH_n(\bmtheta_0)\bmh \right)  \left[ \sup_{||\bmh||\leq M}\exp\left(R_n\left( \frac{\bmh}{\sqrt{n}}+\bmT_n \right)\right) \right]  \\
		&  & \times \left[\sup_{||\bmh||\leq M} \Bigg|\pi \left( \frac{\bmh}{\sqrt{n}} +\bmT_n \right)-\pi(\bmtheta_0)  \Bigg|+\pi(\bmtheta_0)\sup_{||\bmh||\leq M}\Bigg| \exp\left(R_n\left( \frac{\bmh}{\sqrt{n}}+\bmT_n \right)\right) -1  \Bigg| \right].
	\end{eqnarray*}
	By Assumption A3(i)(b),  we have 
	\[
	\sup_{||\bmh||\leq M} \Bigg| R_n\left( \frac{\bmh}{\sqrt{n}}+\bmT_n \right) \Bigg|=o_p(1).
	\]
	Thus, we obtain 
	\[
	\sup_{||\bmh||\leq M} \Bigg| \exp\left(R_n\left( \frac{\bmh}{\sqrt{n}}+\bmT_n \right)\right) \Bigg|=O_p(1) \quad \& \quad  \sup_{||\bmh||\leq M} \Bigg| \exp\left(R_n\left( \frac{\bmh}{\sqrt{n}}+\bmT_n \right)\right)-1 \Bigg|=o_p(1).
	\]
	
	By Assumption A4, we can also derive the following term as
	\[
	\pi \left( \frac{\bmh}{\sqrt{n}} +\bmT_n \right)-\pi(\bmtheta_0) = \left( \frac{\bmh}{\sqrt{n}} +\bmT_n-\bmtheta_0 \right)^T \nabla_{\bmtheta}\pi(\bmtheta_0) + R\left(\left( \frac{\bmh}{\sqrt{n}} +\bmT_n-\bmtheta_0 \right)^2 \right).
	\] 
	Since $n^{-1/2} \bmH_n(\bmtheta_0)^{-1}\bms_n(\bmtheta_0)=o_p(1)$ by Assumptions A3(ii) and A3(iii), we have 
	\begin{eqnarray*}
		\sup_{||\bmh||\leq M} \Bigg| \pi \left( \frac{\bmh}{\sqrt{n}} +\bmT_n \right)-\pi(\bmtheta_0) \Bigg| &\leq & \sup_{||\bmh||\leq M} \Bigg|\Bigg| \frac{\bmh}{\sqrt{n}} +\frac{1}{n}\bmH_n(\bmtheta_0)^{-1}\bms_n(\bmtheta_0)\Bigg|\Bigg| ||\nabla_{\bmtheta}\pi(\bmtheta_0)|| \\
		& & + \sup_{||\bmh||\leq M}\Bigg| R\left( \left(\frac{\bmh}{\sqrt{n}} +\frac{1}{n}\bmH_n(\bmtheta_0)^{-1}\bms_n(\bmtheta_0) \right)^2   \right)     \Bigg|\\
		& =& O_p\left( \frac{1}{\sqrt{n}}\right).
	\end{eqnarray*}
	Hence, $B_{1n}=O_p(n^{-1/2})=o_p(1)$.
	
	Next, in order to show $B_{2n}=o_p(1)$,  first we have that 
	\[
	\pi(\bmtheta_0) \int_{M \leq ||\bmh ||< \delta \sqrt{n}} ||\bmh||^{\alpha} \exp \left( -\frac{1}{2}\bmh^T \bmH_n(\bmtheta_0)\bmh\right) d\bmh \CIP 0
	\]
	with large $M$ as $\nri$ by the properties of the prior function in Assumption A4 and multivariate normal distribution. Thus, it is enough to show, for each $\varepsilon>0 $, there exist large $M$ and small $\delta>0$ such that 
	\[
	\liminf_{\nri} P^{*} \left\{ \int_{M \leq ||\bmh|| < \delta \sqrt{n}} ||\bmh||^{\alpha} \Bigg| \exp(w_n(\bmh)) \pi \left(\bmT_n+\frac{\bmh}{\sqrt{n} } \right)  \Bigg| d\bmh \leq \varepsilon           \right\} \geq 1 - \varepsilon.
	\]
	In order to prove $B_{2n}=o_p(1)$, it is enough to show that for sufficiently large $M$ as $\nri$
	\begin{equation}\label{eq:wnpi}
		\exp(w_n(\bmh)) \pi \left(\bmT_n+\frac{\bmh}{\sqrt{n} } \right)  \leq C \exp \left(   
		-\frac{1}{4}\bmh^T \bmH_n(\bmtheta_0) \bmh \right)
	\end{equation}
	for some $C>0$ and all $M \leq ||\bmh || <\delta\sqrt{n}$. By the properties of the prior function in Assumption A4, $\pi \left(\bmT_n+\frac{\bmh}{\sqrt{n} } \right)$ is bounded over $M \leq ||\bmh || <\delta\sqrt{n}$, which means that we do not need to consider this proof. Note that by using Equation~\eqref{eq:wnr}, we have 
	\begin{eqnarray*}
		\exp\left(w_n(\bmh) \right) &=& \exp\left( -\frac{1}{2}\bmh^T \bmH_n(\bmtheta_0) \bmh +R_n\left(\frac{\bmh}{\sqrt{n}}+\bmT_n \right) \right)\\
		&\leq& \exp\left( -\frac{1}{2}\bmh^T \bmH_n(\bmtheta_0) \bmh +\Bigg| R_n\left(\frac{\bmh}{\sqrt{n}}+\bmT_n \right) \Bigg| \right).
	\end{eqnarray*}
	Since $|\bmT_n-\bmtheta_0|=O_p(n^{-1})=o_p(1)$, for any $\delta>0$
	\[
	\Bigg|\Bigg| \bmT_n + \frac{\bmh}{\sqrt{n}}-\bmtheta_0    \Bigg|\Bigg| \leq \Bigg|\Bigg| \bmT_n -\bmtheta_0    \Bigg|\Bigg| +  \Bigg|\Bigg| \frac{\bmh}{\sqrt{n}} \Bigg|\Bigg| \leq  o_p(1)+ 2 \delta
	\]
	for all $||\bmh||<\delta\sqrt{n}$. Thus, by Assumption A3(i)(a), with $\bmh=\sqrt{n} (\bmtheta-\bmtheta_0) -\frac{1}{\sqrt{n}}\bmH_n(\bmtheta_0)^{-1}\bms_n(\bmtheta_0)$ and $\bmtheta=\bmT_n+\bmh/\sqrt{n}$, there exists some small $\delta>0$ and large $M$ such that 
	\[
	\liminf_n P^{*}\left\{ \sup_{M\leq ||\bmh||< \delta\sqrt{n}} \frac{|R_n(\bmT_n+\bmh/\sqrt{n})|}{||\bmh +\bmH_n(\bmtheta_0)^{-1}\bms_n(\bmtheta_0)/\sqrt{n} ||^2} \leq \frac{1}{4}\min_{j=1,\dots,d} \lambda_j   \right\} \geq 1-\varepsilon, 
	\]
	where $\lambda_j$ is an eigenvalue of $\bmH_n(\bmtheta)$ for $i=1,\dots,d$. Since $||\bmH_n(\bmtheta_0)^{-1}\bms)_n(\bmtheta_0)/n||=o(1)$, for some $C>0$, we have 
	\begin{eqnarray}\label{eq:wne}
		& & \liminf_n P^{*} \left\{\exp(w_n(\bmh)) \leq C\exp \left(   -\frac{1}{4}\bmh^T \bmH_n(\bmtheta_0)\bmh\right)  \right\} \nonumber \\
		&\geq  & \liminf_n P^{*} \left\{\exp(w_n(\bmh)) \leq \exp \left(   -\frac{1}{2}\bmh^T \bmH_n(\bmtheta_0)\bmh + \frac{1}{4} \min_{j=1,\dots,d} \lambda_j ||\bmh||^2  \right) \right\} \nonumber  \\
		&\geq& 1- \varepsilon.
	\end{eqnarray}
	By the property of the prior function in A.4, $\pi(\cdot)$ is bounded for sufficiently large $M$ as $\nri$ and small $\delta>0$. Hence,  for sufficiently large $ M$
	as $\nri$, we show
	\[
	\exp(w_n(\bmh)) \leq C \exp \left(-\frac{1}{4}\bmh^T \bmH_n(\bmtheta_0) \bmh \right)
	\]
	for all $M \leq ||\bmh || < \delta \sqrt{n}$. In addition, the result implies  equation~\eqref{eq:wnpi}. Thus, we obtain $B_{2n}=o_p(1)$.
	
	Lastly, as we have the same method as the second proof, we start with 
	\[
	\pi(\bmtheta_0) \int_{||\bmh|| \geq \delta \sqrt{n}} ||\bmh||^{\alpha} \exp \left( -\frac{1}{2}\bmh^T \bmH_n(\bmtheta_0)\bmh\right) d\bmh \CIP 0
	\]
	Therefore, it is enough to show that as $\nri$
	\[
	\int_{||\bmh|| \geq \delta \sqrt{n}} ||\bmh||^{\alpha} \exp \left( w_n(\bmh)\right)\pi \left( \frac{\bmh}{\sqrt{n}} +\bmT_n\right) d\bmh \CIP 0
	\]
	By the definition of $\bmh$, we rewrite the above equation as
	\[
	n^{(\alpha+1)/2}\int_{||\bmtheta-\bmT_n||\geq \delta} ||\bmT_n-\bmtheta||^{\alpha} \exp \left( L_n(\bmtheta) - L_n(\bmtheta_0) - \frac{1}{2n} \bms_n(\bmtheta_0)^T \bmH_n(\bmtheta_0)^{-1}\bms_n(\bmtheta_0) \right) d\bmtheta.
	\]
	Since $\bmT_n -\bmtheta_0 \rightarrow 0$ as $\nri$, the above equation with probability 1 is bounded by 
	\[
	\kappa_n \cdot C \cdot n^{(\alpha+1)/2} \int_{||\bmtheta-\bmtheta_0||\geq \delta/2} \left( 1+||\bmtheta||^{\alpha} \right) \pi(\bmtheta) \exp\left(L_n(\bmtheta)-L_n(\bmtheta_0)\right)d\bmtheta,
	\]
	where 
	\[
	\kappa_n = \exp \left(- \frac{1}{2n} \bms_n(\bmtheta_0)^T \bmH_n(\bmtheta_0)^{-1}\bms_n(\bmtheta_0)  \right) =O(1). 
	\]
	By Assumption A2, there exists $\varepsilon>0$ such that 
	\[
	\liminf_{\nri} P^{*} \left\{   \sup_{||\bmtheta-\bmtheta_0||\geq \delta/2} \exp\left(L_n(\bmtheta)-L_n(\bmtheta_0)\right) \leq \exp(-n\varepsilon)\right\}=1. 
	\]
	Thus, the equation $\int_{||\bmh|| \geq \delta \sqrt{n}} ||\bmh||^{\alpha} \exp \left( w_n(\bmh)\right)\pi \left( \frac{\bmh}{\sqrt{n}} +\bmT_n\right)$ is bounded by 
	\[
	\kappa_n \cdot C \cdot n^{(\alpha+1)/2} \cdot \exp(-n\varepsilon ) \int_{\Theta} ||\bmtheta||^{\alpha} \pi(\bmtheta) d\bmtheta =o_p(1)
	\]
	because of the compactness of $\int_{\Theta} ||\bmtheta||^{\alpha}\pi(\bmtheta)d\bmtheta<\infty$. Hence, $B_{3n}=o_p(1)$. 
	
	Therefore, the proof is complete.


	\begin{lemma} \label{lm:conv_total_var}
		Suppose that Assumptions A1--A4 hold. As $n \rightarrow \infty$, 
		\[
		\int_{\mathcal{H}_n} ||\bmh||^2 \left| \pi^*(\bmh | \bmY_n,\bmxi_n) - \pi^*(\bmh | \bmY_\infty,\bmxi_\infty) \right| d \bmh \overset{p}{\rightarrow} 0,
		\]
		where $\mathcal{H}_n = \{ \sqrt{n}(\bmtheta - \bmtheta_0 ) - \bmH_n(\bmtheta_0)^{-1}\bms_n(\bmtheta_0)/\sqrt{n} : \bmtheta \in \Theta \}$ and 
		\[
		\pi^*(\bmh | \bmY_\infty,\bmxi_\infty) = \sqrt{\frac{\text{det} \bmH_n(\bmtheta_0)}{(2 \pi)^q }} \exp\left( -\frac{1}{2} \bmh^T \bmH_n(\bmtheta_0) \bmh \right).
		\]
	\end{lemma}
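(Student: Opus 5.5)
Our plan is the standard Chernozhukov--Hong route: normalize the posterior and reduce the claim to Lemma~\ref{lm03}. On $\mathcal{H}_n$ we write the localized posterior as
\[
\pi^*(\bmh|\bmY_n,\bmxi_n)=\frac{\exp\{w_n(\bmh)\}\,\pi(\bmh/\sqrt{n}+\bmT_n)}{C_n},\qquad C_n=\int_{\mathcal{H}_n}\exp\{w_n(\bmh)\}\,\pi(\bmh/\sqrt{n}+\bmT_n)\,d\bmh .
\]
Here $w_n$ is as in \eqref{eq:wn} and $\bmT_n=\bmtheta_0+\bmU_n/\sqrt{n}$. This works because the centering term $-\frac{1}{2n}\bms_n^\top\bmH_n^{-1}\bms_n$ and $L_n(\bmtheta_0)$ do not depend on $\bmh$, so they cancel in the ratio. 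The Jacobian $n^{-q/2}$ of the change of variables likewise cancels. The limit density is
\[
\pi^*(\bmh|\bmY_\infty,\bmxi_\infty)=\frac{\pi(\bmtheta_0)\exp(-\tfrac12\bmh^\top\bmH_n(\bmtheta_0)\bmh)}{C_\infty},\qquad C_\infty=\pi(\bmtheta_0)\,(2\pi)^{q/2}\det\bmH_n(\bmtheta_0)^{-1/2}.
\]
By Assumption~\ref{assumption:4} and the fact that $\bmtheta_0$ is interior, we may take $\pi(\bmtheta_0)>0$. Assumption~\ref{assumption:expansion}(iii) gives uniform positive definiteness of $\bmH_n(\bmtheta_0)$, so $C_\infty$ is bounded away from $0$ and $\infty$ uniformly in $n$.

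\textbf{Step 1.} Apply Lemma~\ref{lm03} with $\alpha=0$. This gives $|C_n-\int_{\mathcal{H}_n}\pi(\bmtheta_0)e^{-\bmh^\top\bmH_n\bmh/2}d\bmh|\CIP 0$. Since $\bmtheta_0$ is interior and $\bmU_n=O_p(1)$, the set $\mathcal{H}_n$ contains a ball of radius of order $\sqrt{n}$ around $0$ with probability tending to one. The Gaussian mass outside $\mathcal{H}_n$ is therefore $o_p(1)$, and we get $C_n\CIP C_\infty$.

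\textbf{Step 2.} Apply the triangle inequality:
\[
\int_{\mathcal{H}_n}\|\bmh\|^2\Big|\tfrac{e^{w_n}\pi}{C_n}-\tfrac{\pi(\bmtheta_0)e^{-\bmh^\top\bmH_n\bmh/2}}{C_\infty}\Big|d\bmh \le \frac{1}{C_n}\,\bmW_n\big|_{\alpha=2}+\Big|\frac1{C_n}-\frac1{C_\infty}\Big|\int\|\bmh\|^2\pi(\bmtheta_0)e^{-\bmh^\top\bmH_n\bmh/2}d\bmh .
\]
The first term is $o_p(1)$ by Lemma~\ref{lm03} with $\alpha=2$ together with Step 1. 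In the second term, the Gaussian second moment is $\pi(\bmtheta_0)C_\infty^{-1}\cdot$tr$\,\bmH_n^{-1}(\bmtheta_0)$ up to the normalization, which is $O(1)$ by Assumption~\ref{assumption:expansion}(iii). The prefactor is $o_p(1)$ by Step 1, so this term vanishes as well.

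The main obstacle is not this assembly but the uniform-in-$n$ control hidden inside Lemma~\ref{lm03}. Three things need care. First, $\bmH_n(\bmtheta_0)$ varies with $n$, so its eigenvalues must be bounded below uniformly; this is what keeps the $\|\bmh\|^2$-weighted Gaussian tails and the $C\exp(-\frac14\bmh^\top\bmH_n\bmh)$ domination summable. Second, one must check that the tail region $\|\bmh\|\ge\delta\sqrt{n}$ is killed by the $e^{-n\varepsilon}$ bound from Assumption~\ref{assumption:identifiability}, which beats the polynomial factor $n^{(\alpha+q)/2}$. Third, one must check that replacing $\mathcal{H}_n$ by $\mathbb{R}^q$ in the limit costs only a Gaussian tail. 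If Lemma~\ref{lm03} is taken as given, only the boundary-of-$\mathcal{H}_n$ argument in Step 1 needs to be written out, using $\|\bmU_n\|=O_p(1)$ from Assumption~\ref{assumption:expansion}(ii),(iii).
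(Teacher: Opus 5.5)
Your proposal is correct and takes essentially the same route as the paper. Both write the localized posterior as $e^{w_n}\pi/C_n$, use Lemma~\ref{lm03} with $\alpha=0$ to get $C_n \CIP \pi(\bmtheta_0)\sqrt{(2\pi)^q/\det\bmH_n(\bmtheta_0)}$, and split the distance into a term handled by Lemma~\ref{lm03} with $\alpha=2$ plus a normalizing-constant mismatch times a bounded Gaussian second moment (your $|1/C_n-1/C_\infty|$ term is algebraically the paper's $A_{2n}/C_n$); your Step~1 argument that $\mathcal{H}_n$ contains a ball of radius of order $\sqrt{n}$ with probability tending to one is a welcome addition, since the paper simply equates $\int_{\mathcal{H}_n}e^{-\bmh^\top\bmH_n(\bmtheta_0)\bmh/2}\,d\bmh$ with the full Gaussian normalizing constant.
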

	
	\bigskip
	
	\noindent \textbf{Proof)} Suppose that $\bmH_n(\bmtheta)$ and $\bmJ_n(\bmtheta)$ do not depend on the size of $\mathcal{S}$ which is the sample size $n$ (i.e., $n=|\mathcal{S}|$). Let $\bmh = \sqrt{n}(\bmtheta-\bmT_n)$ where $\bmT_n=\bmtheta_0+\bmU_n/\sqrt{n}$ and $\bmU_n=\bmH_n(\bmtheta_0)^{-1}\bms_n(\bmtheta_0)/\sqrt{n}$. Then
	\begin{eqnarray*}
		\pi^*(\bmh|\bmY_n,\bmxi_n) &= & \frac{1}{\sqrt{n}} \pi\left( \frac{\bmh}{\sqrt{n}} +\bmtheta_0 + \frac{\bmU_n}{\sqrt{n}} \Bigg| \bmY_n,\bmxi_n\right) \\
		& = & \frac{\pi\left(\frac{\bmh}{\sqrt{n}}+\bmT_n  \right) \exp\left\{\sum_{i\in\mathcal{S}} \xi_i \log f\left(\bmy_i|\frac{\bmh}{\sqrt{n}}+\bmT_n\right) \right\} }{\int_{\mathcal{H}_n }\pi\left(\frac{\bmh}{\sqrt{n}}+\bmT_n  \right) \exp\left\{\sum_{i\in\mathcal{S}} \xi_i \log f\left(\bmy_i|\frac{\bmh}{\sqrt{n}}+\bmT_n\right) \right\}d\bmh} \\
		& =& \frac{1}{C_n}\pi\left(\frac{\bmh}{\sqrt{n}}+\bmT_n  \right) \exp\left\{w_n(\bmh) \right\},
	\end{eqnarray*}
	where $
	w_n(\bmh) = \sum_{i\in\mathcal{S}} \xi_i \log f\left(\bmy_i|\frac{\bmh}{\sqrt{n}}+\bmT_n\right) - \sum_{i\in\mathcal{S}} \xi_i \log f\left(\bmy_i|\bmtheta_0\right) -\frac{1}{2n}\bms_n(\bmtheta_0)^T\bmH_n(\bmtheta_0)^{-1}\bms_n(\bmtheta_0)
	$ and $C_n=\int_{\mathcal{H}_n }\pi\left(\frac{\bmh}{\sqrt{n}}+\bmT_n  \right) \exp\left\{w_n(\bmh) \right\}d\bmh$. 
	
	Thus, by the total variation of moment norms with $\alpha\geq 0$,
	\begin{eqnarray*}
		& & \int_{\mathcal{H}_n}||\bmh||^2 \Bigg| \frac{1}{C_n}\pi\left(\frac{\bmh}{\sqrt{n}}+\bmT_n  \right) \exp\left\{w_n(\bmh) \right\} - \sqrt{\frac{\text{det} \bmH_n(\bmtheta_0)}{(2 \pi)^q }} \exp\left( -\frac{1}{2} \bmh^T \bmH_n(\bmtheta_0) \bmh \right)\Bigg|d\bmh  \\
		&\leq & \frac{1}{C_n} \int_{\mathcal{H}_n}||\bmh||^2 \Bigg| \pi\left(\frac{\bmh}{\sqrt{n}}+\bmT_n  \right) \exp\left\{w_n(\bmh) \right\} - \exp\left( -\frac{1}{2} \bmh^T \bmH_n(\bmtheta_0) \bmh\right) \pi(\bmtheta_0)  \Bigg|d\bmh  \\
		& & + \frac{1}{C_n} \int_{\mathcal{H}_n}||\bmh||^2 \Bigg| C_n\sqrt{\frac{\text{det} \bmH_n(\bmtheta_0)}{(2 \pi)^q }}\exp\left( -\frac{1}{2} \bmh^T \bmH_n(\bmtheta_0)\right) - \exp\left( -\frac{1}{2} \bmh^T \bmH_n(\bmtheta_0) \bmh\right) \pi(\bmtheta_0)  \Bigg|d\bmh  \\
		& \equiv & \frac{1}{C_n}\left(A_{1n}+A_{2n}\right)
	\end{eqnarray*}
	
	By Lemma~\ref{lm03} with $\alpha=0$, as $\nri$,
	\[
	\int_{\mathcal{H}_n}  \Bigg|\exp\left(w_n(\bmh)\right)\pi\left(\frac{\bmh}{\sqrt{n}}+T_n \right) - \exp\left(-\frac{1}{2}\bmh^T\bmH_n(\bmtheta_0)\bmh \right)\pi(\bmtheta_0)   \Bigg| d\bmh \CIP 0.
	\]
	Thus, 
	\begin{eqnarray}\label{eq:Cn}
		C_n & \CIP & \int_{\mathcal{H}_n}\exp\left(-\frac{1}{2}\bmh^T\bmH_n(\bmtheta_0)\bmh \right)\pi(\bmtheta_0) d\bmh \nonumber \\
		& = & \pi(\bmtheta_0)\int_{\mathcal{H}_n}\exp\left(-\frac{1}{2}\bmh^T\bmH_n(\bmtheta_0)\bmh \right) d\bmh \nonumber\\
		&=& \pi(\bmtheta_0) \sqrt{\frac{(2\pi)^q}{|\det(\bmH_n(\bmtheta_0)|}} \nonumber\\
		& =& O(1)
	\end{eqnarray}
	Thus, $C_n=O_p(1)$ as $\nri$. Since $\pi(\bmtheta_0)\sqrt{\frac{(2\pi)^q}{|\det(\bmH_n(\bmtheta_0)|}}\neq 0$, by continuous mapping theorem $C_n^{-1} =O_p(1)$ as $\nri$. By Lemma~\ref{lm03} with $\alpha=2$, as $\nri$, we have 
	\[
	A_{1n}=o_p(1). 
	\]
	Since $\int_{\mathcal{H}_n}||\bmh||^2  \exp\left( -\frac{1}{2} \bmh^T \bmH_n(\bmtheta_0) \bmh\right) \pi(\bmtheta_0)d\bmh=O(1) $,  
	\begin{eqnarray*}
		A_{2n}&=& \Bigg|C_n\sqrt{\frac{\text{det} \bmH_n(\bmtheta_0)}{(2 \pi)^q }} -\pi(\bmtheta_0)\Bigg| \int_{\mathcal{H}_n}||\bmh||^2  \exp\left( -\frac{1}{2} \bmh^T \bmH_n(\bmtheta_0) \bmh\right) \pi(\bmtheta_0)d\bmh \\
		&=& \Bigg|C_n\sqrt{\frac{\text{det} \bmH_n(\bmtheta_0)}{(2 \pi)^q }} -\pi(\bmtheta_0)\Bigg| \times O(1).
	\end{eqnarray*}
	Equation \eqref{eq:Cn} implies that 
	\[
	\Bigg|C_n\sqrt{\frac{\text{det} \bmH_n(\bmtheta_0)}{(2 \pi)^q }} -\pi(\bmtheta_0)\Bigg|=o_p(1) \quad \text{ as} \quad \nri.
	\]
	Hence, by Slutsky's theorem, we have 
	\[
	\frac{1}{C_n}\left(A_{1n}+A_{2n}\right) =o_p(1) \quad \text{as} \quad \nri. 
	\]
	The proof is complete.

	\begin{lemma}\label{lm05} Suppose that Assumptions A1--A3 hold and the Bayes estimator is defined as
		\[
		\bmZ_n\equiv \arg\inf_{\bmz\in \mathbb{R}^q}Q_{\infty} (\bmz)
		\]
		where 
		\[
		Q_{\infty}(\bmz)= \int_{\bmz\in\mathbb{R}^q} (\bmz-\bmh-\bmU_n)^2\pi^{*}(\bmh|\bmY_{\infty},\bmxi_{\infty})d\bmh.
		\]
		Then, as $\nri$, 
		\[
		\bmZ_n-\sqrt{n}(\hat{\bmtheta}-\bmtheta_0) =o_p(1).
		\]
		with $\hat{\bmtheta}=\arg\sup_{\bmtheta\in\Theta} L(\bmtheta)$
	\end{lemma}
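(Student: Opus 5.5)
The plan is to show that both quantities in the statement are asymptotically equal to the same linear term $\bmU_n=\bmH_n(\bmtheta_0)^{-1}\bms_n(\bmtheta_0)/\sqrt{n}$. Here $\hat{\bmtheta}=\arg\sup_{\bmtheta\in\Theta}L_n(\bmtheta)$ is the extremum (weighted pseudo-likelihood) estimator. Its existence follows from compactness of $\Theta$ in \ref{assumption:parameter}; alternatively one can take any near-maximizer with $L_n(\hat{\bmtheta})\ge\sup_{\Theta}L_n-o_p(1)$.

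\textbf{Step 1 ($\bmZ_n$ is exactly $\bmU_n$).} Since $\pi^*(\bmh|\bmY_\infty,\bmxi_\infty)$ is the $N(\bmzero,\bmH_n(\bmtheta_0)^{-1})$ density, expanding the square gives
$$Q_\infty(\bmz)=\|\bmz-\bmU_n\|^2+\mathrm{tr}\{\bmH_n(\bmtheta_0)^{-1}\}.$$
The cross term vanishes because the Gaussian has mean zero. Hence $\bmZ_n=\bmU_n$ exactly. By \ref{assumption:expansion}(ii)--(iii), $\bmU_n=O_p(1)$. It therefore suffices to prove $\sqrt{n}(\hat{\bmtheta}-\bmtheta_0)=\bmU_n+o_p(1)$.

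\textbf{Step 2 (consistency).} We have $L_n(\hat{\bmtheta})-L_n(\bmtheta_0)\ge -o_p(1)$. By \ref{assumption:identifiability}, with probability tending to one, $\sup_{\|\bmtheta-\bmtheta_0\|\ge\delta}\{L_n(\bmtheta)-L_n(\bmtheta_0)\}\le -n\varepsilon$. Hence $P(\|\hat{\bmtheta}-\bmtheta_0\|\ge\delta)\to0$ for every $\delta>0$.

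\textbf{Step 3 ($\sqrt{n}$-rate).} This is the step I expect to be the main obstacle. On the annulus $M/\sqrt{n}\le\|\bmtheta-\bmtheta_0\|\le\delta$, I will combine the expansion in \ref{assumption:expansion}(i) with the remainder bound (i)(a). Let $\lambda>0$ be a uniform lower bound on the smallest eigenvalue of $\bmH_n(\bmtheta_0)$, from (iii), and choose $\varepsilon<\lambda/4$. Then with probability at least $1-\varepsilon$,
$$L_n(\bmtheta)-L_n(\bmtheta_0)\le \|\bmtheta-\bmtheta_0\|\,\|\bms_n(\bmtheta_0)\| -\tfrac{\lambda}{4}\,n\|\bmtheta-\bmtheta_0\|^2 .$$
Write $r=\sqrt{n}\|\bmtheta-\bmtheta_0\|$. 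Since $\|\bms_n(\bmtheta_0)\|/\sqrt{n}=O_p(1)$, the right-hand side equals $r\,O_p(1)-\lambda r^2/4$. This is strictly negative, and below $-c$ for any fixed $c$, once $M$ is large. Because $L_n(\hat{\bmtheta})-L_n(\bmtheta_0)\ge -o_p(1)$, the estimator cannot lie in the annulus with probability exceeding $\varepsilon$ for $M$ large. Combined with Step 2, this gives $\sqrt{n}(\hat{\bmtheta}-\bmtheta_0)=O_p(1)$. The care needed is in choosing $M$ after $\varepsilon$, uniformly in $n$; the peeling into ball versus annulus handles this.

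\textbf{Step 4 (local quadratic argument).} On the ball $\|\bmtheta-\bmtheta_0\|\le M/\sqrt{n}$, \ref{assumption:expansion}(i)(b) gives, uniformly,
$$L_n(\bmtheta)-L_n(\bmtheta_0)=\tfrac12\bmU_n^{\top}\bmH_n(\bmtheta_0)\bmU_n-\tfrac12\bigl(\sqrt{n}(\bmtheta-\bmtheta_0)-\bmU_n\bigr)^{\top}\bmH_n(\bmtheta_0)\bigl(\sqrt{n}(\bmtheta-\bmtheta_0)-\bmU_n\bigr)+o_p(1),$$
which follows by completing the square. Let $\bmtheta^*=\bmtheta_0+\bmU_n/\sqrt{n}$. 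For $M$ large, both $\bmtheta^*$ and $\hat{\bmtheta}$ lie in this ball with probability at least $1-\varepsilon$, using Step 3 and $\bmU_n=O_p(1)$.

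Using $L_n(\hat{\bmtheta})\ge L_n(\bmtheta^*)-o_p(1)$ and subtracting the two expansions, the leading constants cancel. This leaves
$$\tfrac12\bigl(\sqrt{n}(\hat{\bmtheta}-\bmtheta_0)-\bmU_n\bigr)^{\top}\bmH_n(\bmtheta_0)\bigl(\sqrt{n}(\hat{\bmtheta}-\bmtheta_0)-\bmU_n\bigr)\le o_p(1).$$
Uniform positive definiteness of $\bmH_n(\bmtheta_0)$ then yields $\sqrt{n}(\hat{\bmtheta}-\bmtheta_0)-\bmU_n=o_p(1)$. Together with $\bmZ_n=\bmU_n$ from Step 1, this gives the claim.
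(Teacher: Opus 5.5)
Your proof is correct. Step 1 is the same as the paper's: both expand the Gaussian quadratic and read off $\bmZ_n=\bmU_n$. For $\sqrt{n}(\hat{\bmtheta}-\bmtheta_0)=\bmU_n+o_p(1)$, however, you take a different route.

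The paper argues through the estimating equation. It writes $\bmzero=\bms_n(\hat{\bmtheta})=\bms_n(\bmtheta_0)-n\bmH_n(\bmtheta_0)(\hat{\bmtheta}-\bmtheta_0)+o_p(\sqrt{n})$, asserts this uniformly on $\|\bmtheta-\bmtheta_0\|=O(n^{-1/2})$, and rearranges. That is shorter and is the classical M-estimation route, but it tacitly uses more than A1--A3 provide:
\begin{enumerate}
\item that $\hat{\bmtheta}$ is an interior zero of the score;
\item that the score itself, not just $L_n$, has a linear expansion with $o_p(\sqrt{n})$ remainder, which would need a uniform Hessian condition such as (iii) of Lemma~\ref{lm02};
\item that $\hat{\bmtheta}$ already lies in an $O(n^{-1/2})$ neighbourhood of $\bmtheta_0$, which the paper never establishes.
\end{enumerate}

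Your argument compares only values of $L_n$. A2 gives consistency. A3(i)(a) on the annulus gives the $\sqrt{n}$-rate. A3(i)(b) on the ball, together with $L_n(\hat{\bmtheta})\ge L_n(\bmtheta_0+\bmU_n/\sqrt{n})-o_p(1)$, gives the linear representation. So your proof uses exactly the stated assumptions, needs neither differentiability of the score nor a first-order condition, and also covers near-maximizers.

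Two small points are worth making explicit. First, the comparison point $\bmtheta_0+\bmU_n/\sqrt{n}$ must lie in $\Theta$. This holds with probability tending to one, because $\bmtheta_0$ is interior and $\bmU_n/\sqrt{n}=o_p(1)$. Second, in Step 3 the single $\varepsilon$ in A3(i)(a) controls both the remainder bound and the exceptional probability. So take it below both $\lambda/4$ and your target level before fixing $\delta$ and $M$. You essentially say this already. (Existence of an exact maximizer also needs upper semicontinuity of $L_n$, not just compactness, but your near-maximizer version sidesteps that.)
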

	
	\bigskip

	\noindent \textbf{Proof)} Since $\pi^{*}(\bmh|\bmY_{\infty},\bmxi_{\infty})$ is a centered Gaussian with precision $\bmH_n(\bmtheta_0)$,
	\[
	\E_{\pi^*}[\bmh]=0, \quad \text{and} \quad \int\|z-(\bmh+\bmU_n)\|^2\pi^*(\bmh) d\bmh = \|\bmz\|^2 - 2 \bmz^\top \bmU_n + C,
	\]
	where $C$ is a constant. Hence $Q_{\infty}(\bmz)$ is a strictly convex quadratic in $\bmz$ with gradient
	$\nabla_{\bmz} Q_\infty(\bmz)=2(\bmz-\bmU_n)$. Therefore, its unique minimizer is
	\begin{equation}\label{eq:Zn}
		\bmZ_n = \bmU_n =\bmH_n(\bmtheta_0)^{-1}\,\frac{\bms_n(\bmtheta_0)}{\sqrt{n}}.
	\end{equation}
	
	By Assumption A3, a second-order Taylor expansion of the score around $\bmtheta_0$ yields, uniformly on
	$\|\bmtheta-\bmtheta_0\|=O(n^{-1/2})$,
	\[
	0 =\bms_n(\hat{\bmtheta})
	=\bms_n(\bmtheta_0) - n \bmH_n(\bmtheta_0)\,(\hat{\bmtheta}-\bmtheta_0) + o_p(\sqrt{n}),
	\]
	where $\bmH_n(\bmtheta_0)$ is positive definite and $\bms_n(\bmtheta_0)/\sqrt{n}=O_p(1)$.
	Rearranging gives
	\begin{equation}\label{eq:Un}
		\sqrt{n}(\hat{\bmtheta}-\bmtheta_0)
		=
		\bmH_n(\bmtheta_0)^{-1}\frac{\bms_n(\bmtheta_0)}{\sqrt{n}} + o_p(1)
		=\bmU_n + o_p(1).
	\end{equation}
	
	From Equation~\eqref{eq:Zn}, $Z_n=U_n$, and from Equation~\eqref{eq:Un}, $\sqrt{n}(\hat{\bmtheta}-\bmtheta_0)=\bmU_n+o_p(1)$.
	Therefore, $\bmZ_n-\sqrt{n}(\hat{\bmtheta}-\bmtheta_0)=o_p(1)$, as claimed.


	\begin{lemma}\label{lm06} \citep{anderson1955inequaltiy}
		Let $f:\mathbb{R}^p \rightarrow [0,\infty)$ be symmetric with convex level sets. Let $\bmE \subset \mathbb{R}^p$ be a symmetric convex set. Then, for every $\bmy \in \mathbb{R}^p$ and every $\lambda\in[0,1]$, 
		\[
		\int_{\bmE} f(\bmx-\lambda\bmy)d\bmx \geq \int_{\bmE} f(\bmx-\bmy)d\bmx.
		\]
		In particular, the map $\bmy \mapsto \int_{\bmE} f(\bmx-\bmy)d\bmx$ is maximized at $\bmy=\bmzero$.
	\end{lemma}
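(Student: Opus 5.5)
The plan is to reduce the statement to the case where $f$ is an indicator of a symmetric convex set, using the layer-cake representation, and then to handle that case with the Brunn--Minkowski inequality.

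\textbf{Layer-cake reduction.} Write
\[
f(\bmx)=\int_0^\infty \mathbf{1}\{\bmx\in K_t\}\,dt, \qquad K_t=\{\bmx: f(\bmx)\ge t\}.
\]
By hypothesis each $K_t$ is convex. Each $K_t$ is also symmetric, because $f(-\bmx)=f(\bmx)$. By Tonelli's theorem,
\[
\int_{\bmE} f(\bmx-\bmy)\,d\bmx=\int_0^\infty g_t(\bmy)\,dt, \qquad g_t(\bmy)=\mathrm{vol}\{\bmE\cap(K_t+\bmy)\}.
\]
It therefore suffices to prove $g(\lambda\bmy)\ge g(\bmy)$ for $g(\bmy)=\mathrm{vol}\{\bmE\cap(K+\bmy)\}$, for every symmetric convex $K$ and every $\lambda\in[0,1]$. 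Integrating this inequality over $t$ then gives the lemma.

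\textbf{Indicator case.} I would establish two properties of $g$.

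First, $g$ is even. Since $\bmE$ and $K$ are both symmetric,
\[
\bmE\cap(K-\bmy)=-\{\bmE\cap(K+\bmy)\},
\]
and Lebesgue measure is invariant under $\bmx\mapsto-\bmx$.

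Second, $g^{1/p}$ is concave on the set where $g>0$. For $\bmy_1,\bmy_2$ and $s\in[0,1]$, convexity of $\bmE$ and $K$ gives
\[
s\{\bmE\cap(K+\bmy_1)\}+(1-s)\{\bmE\cap(K+\bmy_2)\}\subseteq \bmE\cap\{K+s\bmy_1+(1-s)\bmy_2\}.
\]
The Brunn--Minkowski inequality then yields
\[
g\{s\bmy_1+(1-s)\bmy_2\}^{1/p}\ge s\,g(\bmy_1)^{1/p}+(1-s)\,g(\bmy_2)^{1/p}.
\]

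Now take $\bmy_1=\bmy$, $\bmy_2=-\bmy$ and $s=(1+\lambda)/2$, so that $s\bmy_1+(1-s)\bmy_2=\lambda\bmy$. Using evenness, $g(-\bmy)=g(\bmy)$, and the inequality gives $g(\lambda\bmy)^{1/p}\ge g(\bmy)^{1/p}$. If $g(\bmy)=0$ the claim is trivial. Taking $\lambda=0$ shows that $\bmy=\bmzero$ maximizes $g$, and hence also maximizes $\int_{\bmE} f(\bmx-\bmy)\,d\bmx$.

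\textbf{Main obstacle.} The delicate step is applying Brunn--Minkowski when the sets involved may have infinite volume or be lower-dimensional, for example when $K_t$ or $\bmE$ is unbounded. I would first replace $\bmE$ by $\bmE\cap B_R$, where $B_R$ is a centered ball; this set is still symmetric and convex. The argument above then runs with finite volumes, and I would let $R\to\infty$ using monotone convergence. Degenerate cases, where one intersection is empty or null, are absorbed by the trivial case $g(\bmy)=0$ or by the version of Brunn--Minkowski for nonempty compact sets, after approximating each convex set from inside by compact ones. Measurability of $(t,\bmx)\mapsto\mathbf{1}\{\bmx\in K_t\}$ follows from measurability of $f$, which justifies the use of Tonelli's theorem.
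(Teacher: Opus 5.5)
Your argument is correct, reading ``convex level sets'' as convex superlevel sets $\{f\ge t\}$. The steps are the layer-cake reduction to $g(y)=\mathrm{vol}\{E\cap(K_t+y)\}$, evenness of $g$, and Brunn--Minkowski applied at $\pm y$ with $s=(1+\lambda)/2$; this is essentially the standard proof of Anderson's inequality. The paper gives no proof of its own and only cites Anderson (1955), so the comparison is with that source; your truncation to $E\cap B_R$ also makes the inequality hold in $[0,\infty]$ without the integrability hypothesis $\int_E f<\infty$ of Anderson's original statement, which the paper's version drops.
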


	\bigskip

	\begin{lemma}\label{lm07} \citep{pollard1991asymptotic}
		Let $\{\lambda_n(\theta):\theta \in \Theta\}$ be a sequence of random convex functions defined on a convex, open subset $\Theta$ of $\mathbb{R}^q$. Suppose $\lambda(\cdot)$ is a real-valued function on $\Theta$ for which $\lambda_n(\theta) \rightarrow \lambda(\theta)$ in probability, for each $\theta$ in $\Theta$. Then, for each compact subset $K$ of $\Theta$, 
		\[
		\sup_{\theta\in K} |\lambda_n(\theta)-\lambda(\theta)| \rightarrow 0 \quad \text{in probability}.
		\]
		The function $\lambda(\cdot)$ is necessarily convex on $\Theta$.
	\end{lemma}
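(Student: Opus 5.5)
The statement is the convexity lemma of \cite{pollard1991asymptotic}. I would prove it in two layers: first a deterministic statement for convex functions, then a subsequence argument that upgrades it to convergence in probability. The deterministic statement is this. If real convex functions $\lambda_n$ on the open convex set $\Theta$ converge pointwise on a countable dense subset $D\subset\Theta$ to a finite convex (hence continuous) function $\lambda$, then $\sup_{\theta\in K}|\lambda_n(\theta)-\lambda(\theta)|\to 0$ for every compact $K\subset\Theta$.

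For the reduction, I would take $D=\mathbb{Q}^q\cap\Theta$ and use the characterization that $X_n\to X$ in probability iff every subsequence has a further subsequence converging almost surely. Given any subsequence, I would extract sub-subsequences for each point of $D$ in turn and then pass to a diagonal subsequence. Along it, $\lambda_n(\theta)\to\lambda(\theta)$ for all $\theta\in D$ simultaneously on one event of probability one.

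Convexity of $\lambda$ comes out of the same device. For fixed $\theta_1,\theta_2$ and $t\in[0,1]$, take a subsequence along which $\lambda_n$ converges a.s. at $\theta_1$, $\theta_2$ and $t\theta_1+(1-t)\theta_2$. The inequality $\lambda_n(t\theta_1+(1-t)\theta_2)\le t\lambda_n(\theta_1)+(1-t)\lambda_n(\theta_2)$ then passes to the limit, so $\lambda$ is convex. Being finite and convex on an open set, $\lambda$ is continuous, and therefore uniformly continuous on any compact subset of $\Theta$.

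On that probability-one event I would apply the deterministic statement. This gives a.s. uniform convergence on $K$ along the sub-subsequence, which yields the claimed convergence in probability.

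For the deterministic statement, fix $\varepsilon>0$ and choose $\delta>0$ with $K^{2\delta}=\{\theta:\mathrm{dist}(\theta,K)\le 2\delta\}\subset\Theta$ compact. By uniform continuity, $|\lambda(\theta)-\lambda(\theta')|<\varepsilon$ whenever $\theta,\theta'\in K^{2\delta}$ and $\|\theta-\theta'\|<\eta$. Then I would cover $K^{\delta}$ by finitely many small cubes of side less than $\eta/\sqrt{q}$ with vertices in $D\cap K^{2\delta}$, and let $G$ be the finite set of all these vertices. Since $G$ is finite, for $n$ large we have $\max_{g\in G}|\lambda_n(g)-\lambda(g)|<\varepsilon$.

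\emph{Upper bound.} Every $\theta\in K^{\delta}$ lies in a cube and is a convex combination of its vertices, so
\[
\lambda_n(\theta)\le\max_{\text{vertices } g}\lambda_n(g)\le\max_g\lambda(g)+\varepsilon\le\lambda(\theta)+2\varepsilon.
\]
This holds uniformly on $K^\delta$, and in particular $\lambda_n$ is uniformly bounded above there.

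\emph{Lower bound.} This is the step I expect to be the main obstacle. For $\theta\in K$, pick a grid point $g\in G$ near $\theta$ and reflect through it: set $z=g+\rho(g-\theta)/\|g-\theta\|$ with $\rho$ of order $\delta$, so that $z\in K^\delta$ and $g=\alpha\theta+(1-\alpha)z$ with $1-\alpha=\|g-\theta\|/(\|g-\theta\|+\rho)$ small. Convexity gives $\lambda_n(g)\le\alpha\lambda_n(\theta)+(1-\alpha)\lambda_n(z)$, so
\[
\lambda_n(\theta)\ge\bigl[\lambda_n(g)-(1-\alpha)\lambda_n(z)\bigr]/\alpha.
\]
Using $\lambda_n(g)\ge\lambda(g)-\varepsilon$ and the upper bound on $\lambda_n(z)$ from the previous step, together with the boundedness of $\lambda$ on $K^{2\delta}$, I expect this to give $\lambda_n(\theta)\ge\lambda(\theta)-C\varepsilon$ once the mesh is taken small relative to $\rho$. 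The delicate point is the order of choices: first fix $\delta$ and $\rho$, then choose the mesh fine enough that $1-\alpha$ times the bound $\sup_{K^{2\delta}}|\lambda|+2\varepsilon$ is below $\varepsilon$, and only then take $n$ large. Combining the two bounds gives $\sup_K|\lambda_n-\lambda|\le C\varepsilon$ eventually, which completes the argument.
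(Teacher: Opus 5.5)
The paper gives no proof of this lemma. It is simply imported from \cite{pollard1991asymptotic} and used in the proof of Theorem~2, so there is no in-paper argument to compare against. Your proof is correct and self-contained, and it is the standard one. The subsequence and diagonal reduction is the Andersen--Gill (1982, Theorem II.1) route to the deterministic result of Rockafellar (1970, Theorem 10.8). Your deterministic step is the usual direct proof of that theorem: an upper bound from writing points as convex combinations of vertices, and a lower bound from reflecting through a grid point.

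Two pieces of bookkeeping should be made explicit, though neither is a gap.
\begin{itemize}
\item \emph{Choice of grid.} Take the mesh $h$ rational, so that the vertices lie in $D=\mathbb{Q}^q\cap\Theta$. Also require $\sqrt{q}\,h<\min(\eta,\delta/2)$ and set $\rho=\delta/2$. Then every vertex lies in $K^{2\delta}$, $z\in K^{\delta}$, and $\|g-\theta\|<\rho$. The last fact gives $\alpha>1/2$ automatically.
\item \emph{Dividing by $\alpha$.} In the lower bound, dividing by $\alpha$ also produces the term $(1-\alpha)\lambda(\theta)/\alpha$. The condition to impose is therefore $(1-\alpha)(2B+2\varepsilon)\le\varepsilon$, where $B=\sup_{K^{2\delta}}|\lambda|$. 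With this you get $\lambda_n(\theta)-\lambda(\theta)\ge -3\varepsilon/\alpha\ge -6\varepsilon$ on $K$.
\end{itemize}

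The diagonal extraction is actually dispensable. Your deterministic bounds show that any single convex function $\phi$ on $\Theta$ with $\max_{g\in G}|\phi(g)-\lambda(g)|<\varepsilon$ satisfies $\sup_K|\phi-\lambda|\le 6\varepsilon$. Since $G$ is finite, this gives directly
$P\{\sup_K|\lambda_n-\lambda|>6\varepsilon\}\le P\{\max_{g\in G}|\lambda_n(g)-\lambda(g)|\ge\varepsilon\}\to 0$.
Convexity of $\lambda$ also follows without subsequences. The inequality $\lambda_n(t\theta_1+(1-t)\theta_2)\le t\lambda_n(\theta_1)+(1-t)\lambda_n(\theta_2)$ holds for every $n$, and a weak inequality between sequences converging in probability to constants passes to the limits. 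This version avoids the countable dense set and the almost-sure subsequences entirely, at no cost in length.
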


	
	\section*{Simulation Setting }
	
	\setcounter{equation}{0}
	
	This section describes the detailed setting of the simulation studies included in Section 4 of the main text. To generate the simulation data from $K=3$ mixture components of multivarite normal distributions with $ p=4$ variables, we used the following simulation parameters:
	$$
	\boldsymbol{\eta}  =  (0.25, 0.6, 0.15), \ 
	\bmmu_1 =  (2,2,3,2), \
	\bmmu_2  =  (6,5,6,6), \ \text{and} \
	\bmmu_3 =  (8,7,9,10) 
	$$  
	with following covariance matrices
	\begin{align*}
		\bmSigma_1 & = \left[ \begin{matrix}
			2 & 1.3 & 0.8 & 0.8 \\ 
			1.3 & 1 & 0.6 & 0.7 \\ 
			0.8 & 0.6 & 1 & 0.8 \\ 
			0.8 & 0.7 & 0.8 & 1 \\ 
		\end{matrix} \right], \bmSigma_2 = \left[ \begin{matrix}
			1.5 & -0.3 & 0.5 & 0.8 \\ 
			-0.3 & 1.5 & 0.6 & 0.7 \\ 
			0.5 & 0.6 & 1.2 & 0.6 \\ 
			0.8 & 0.7 & 0.6 & 1.0 \\ 
		\end{matrix} \right], 
		\bmSigma_3 = \left[ \begin{matrix}
			2 & 1.3 & 0.9 & 1.5 \\ 
			1.3 & 1.2 & 0.5 & 0.7 \\ 
			0.9 & 0.5 & 0.7 & 0.6 \\ 
			1.5 & 0.7 & 0.6 & 1.7 \\ 
		\end{matrix} \right].  
	\end{align*} 
	

\clearpage
\renewcommand{\refname}{References for Supplement}

	
	
	
	
	\end{document}